\newcommand{\f}{\mathbb{F}_2}
\newcommand{\wh}{\widehat}
\newcommand{\B}{\mathcal{B}}
\newcommand{\R}{\mathcal{R}}
\newcommand{\E}{\mathbb{E}}
\newcommand{\Z}{\mathbb{Z}}
\newcommand{\G}{\mathcal{G}}
\newcommand{\C}{\mathcal{C}}
\renewcommand{\d}{\mathsf{d}}
\renewcommand{\S}{\mathcal{S}}
\newcommand{\1}{\mathbf{1}}
\renewcommand{\P}{\mathcal{P}}
\renewcommand{\int}{\mathrm{int}}
\newcommand{\ADD}{\mathrm{ADD}}
\newcommand{\PDT}{\mathrm{PDT}}
\newcommand{\zone}{\{0, 1\}}
\newcommand{\pmone}{\{-1, 1\}}
\newcommand{\reals}{\mathbb{R}}
\newcommand{\bra}[1]{\left\{#1\right\}}
\newcommand{\wdeg}{\widetilde{\mathsf{deg}}}
\newcommand{\mathify}[1]{\ifmmode{#1}\else\mbox{$#1$}\fi}
\newcommand{\abs}[1]{\mathify{\left| #1 \right|}}
\newtheorem{theorem}{Theorem}[section]
\newtheorem{corollary}[theorem]{Corollary}
\newtheorem{definition}[theorem]{Definition}
\newtheorem{proposition}[theorem]{Proposition}
\newtheorem{observation}[theorem]{Observation}
\newtheorem{lemma}[theorem]{Lemma}
\newtheorem{claim}[theorem]{Claim}
\newtheorem{question}[theorem]{Question}
\newtheorem{conjecture}[theorem]{Conjecture}
\title{On parity decision trees for Fourier-sparse Boolean functions}
\author{
Nikhil S.~Mande\\
Georgetown University\\
\textsf{nikhil.mande@georgetown.edu}
\and
Swagato Sanyal\footnote{S.S.~is supported by an ISIRD Grant from Sponsored Research and Industrial Consultancy, IIT Kharagpur.}\\
IIT Kharagpur\\
\textsf{swagato@cse.iitkgp.ac.in}}
\date{}
\begin{document}

\maketitle

\begin{abstract}
We study parity decision trees for Boolean functions. The motivation of our study is the log-rank conjecture for XOR functions and its connection to Fourier analysis and parity decision tree complexity. Our contributions are as follows. Let $f : \f^n \to \pmone$ be a Boolean function with Fourier support $\S$ and Fourier sparsity $k$.
\begin{itemize}
    \item We prove via the probabilistic method that there exists a parity decision tree of depth $O(\sqrt{k})$ that computes $f$. This matches the best known upper bound on the parity decision tree complexity of Boolean functions (Tsang, Wong, Xie, and Zhang, FOCS 2013). Moreover, while previous constructions (Tsang et al., FOCS 2013, Shpilka, Tal, and Volk, Comput.~Complex.~2017) build the trees by carefully choosing the parities to be queried in each step, our proof shows that a naive sampling of the parities suffices.
    \item We generalize the above result by showing that if the Fourier spectra of Boolean functions satisfy a natural ``folding property'', then the above proof can be adapted to establish existence of a tree of complexity polynomially smaller than $O(\sqrt k)$. More concretely, the folding property we consider is that for most distinct $\gamma, \delta$ in $\S$, there are at least a polynomial (in $k$) number of pairs $(\alpha, \beta)$ of parities in $\S$ such that $\alpha+\beta=\gamma+\delta$.
    We make a conjecture in this regard which, if true, implies that the communication complexity of an XOR function is bounded above by the fourth root of the rank of its communication matrix, improving upon the previously known upper bound of square root of rank (Tsang et al., FOCS 2013, Lovett, J.~ACM.~2016).
    \item Motivated by the above, we present some structural results about the Fourier spectra of Boolean functions.
    It can be shown by elementary techniques that for any Boolean function $f$ and all $(\alpha, \beta)$ in $\binom{\S}{2}$, there exists another pair $(\gamma, \delta)$ in $\binom{\S}{2}$ such that $\alpha + \beta = \gamma + \delta$. One can view this as a ``trivial'' folding property that all Boolean functions satisfy. Prior to our work, it was conceivable that for all $(\alpha, \beta) \in \binom{\S}{2}$, there exists exactly one other pair $(\gamma, \delta) \in \binom{\S}{2}$ with $\alpha + \beta = \gamma + \delta$. We show, among other results, that there must exist several $\gamma \in \f^n$ such that there are at least three pairs of parities $(\alpha_1, \alpha_2) \in \binom{\S}{2}$ with $\alpha_1+\alpha_2=\gamma$. This, in particular, rules out the possibility stated earlier.
\end{itemize} 
\end{abstract}

\section{Introduction}
\label{sec:introduction}

The log-rank conjecture~\cite{LS88} is a fundamental unsolved question in communication complexity that states that the deterministic communication complexity of a Boolean function is polynomially related to the logarithm of the rank (over real numbers) of its communication matrix. The importance of the conjecture stems from the fact that it proposes to characterize communication complexity, which is an interactive complexity measure, by the rank of a matrix which is a traditional and well-understood algebraic measure. In this work we focus on the important and well-studied class of XOR functions. Consider a two-party function $F:\f^n\times\f^n \rightarrow\pmone$ whose value on any input $(x,y)$ depends only on the bitwise XOR of $x$ and $y$, i.e., there exists a function $f: \f^n \rightarrow\pmone$ such that for each $(x,y) \in \f^n$, $F(x,y)=f(x \oplus y)$. Such a function $F$ is called an XOR function, and is denoted as $F = f \circ \oplus$. The log-rank conjecture and communication complexity of such an XOR function $F$ has interesting connections with the Fourier spectrum of $f$. For example, it is known that the rank of the communication matrix of $F$ equals the Fourier sparsity of $f$ (henceforth referred to as $k$)~\cite{BC99}. The natural randomized analogue of the log-rank conjecture is the log-approximate-rank conjecture~\cite{LS09}, which was recently refuted by Chattopadhyay, Mande, and Sherif~\cite{CMS19}. The quantum analogue of the log-rank conjecture was subsequently also refuted by Sinha and de Wolf~\cite{SdW19} and Anshu, Boddu, and Touchette~\cite{ABT19}. It is worth noting that an XOR function was used to refute these conjectures.
 
 To design a cheap communication protocol for $F$, an approach adopted by many works~\cite{STV17, TWXZ13, MO09} is to design a small-depth parity decision tree (henceforth referred to as PDT) for $f$, and having a communication protocol simulate the tree; it is easy to see that the parity of a subset of bits of the string $x \oplus y$ can be computed by the communicating parties by interchanging two bits. The parity decision tree complexity (henceforth referred to as PDT($\cdot$)) of $f$ thus places an asymptotic upper bound on the communication complexity of $F$. The work of Hatami, Hosseini and Lovett~\cite{HHL18} shows that this approach is polynomially tight; they showed that $\PDT(f)$ is polynomially related to the deterministic communication complexity of $F$. In light of this, the log-rank conjecture for XOR functions $F=f \circ \oplus$ is readily seen to be equivalent to $\PDT(f)$ being polylogarithmic in $k$.
 
However, we are currently very far from achieving this goal. Lovett~\cite{Lov16} showed that the deterministic communication complexity of any Boolean function $F$ is bounded above by $O(\sqrt \mathsf{rank}(F)\log \mathsf{rank}(F))$. In particular, this implies that that the deterministic communication complexity of $F = f \circ \oplus$ is $O(\sqrt k \log k)$. Improving upon a work of Shpilka et al.~\cite{STV17}, Tsang et al.~\cite{TWXZ13} showed that $\PDT(f) = O(\sqrt{k})$. In addition to bounding $\PDT(f)$ instead of the communication complexity of $F$, Tsang et al.~achieved a quantitative improvement by a logarithmic factor over Lovett's bound for the class of XOR functions. Sanyal~\cite{San19} showed that the simultaneous communication complexity of $F$ (characterized by the Fourier dimension of $f$) is bounded above by $O(\sqrt k \log k)$, and is tight (up to the $\log k$ factor) for the addressing function. 

In this work we derive new understanding about the structure of Fourier spectra of Boolean functions. Aided by this insight we reprove the $O(\sqrt{k})$ upper bound on $\PDT(f)$ (see Sections~\ref{section:tsang} and~\ref{section:result1}). We conditionally improve this bound by a polynomial factor, assuming a ``folding property'' of the Fourier  spectra of Boolean functions (see Section~\ref{section:result2}). To prove these results, we make use of a simple necessary condition for a function to be Boolean (see Proposition~\ref{prop:tit}). While we show that it is not a sufficient condition (see Theorem~\ref{thm: ceg} in Appendix~\ref{sec: app}), it does enable us to prove the above results. In these proofs, we use Proposition~\ref{prop:tit} in conjunction with probabilistic and combinatorial arguments. Finally, we make progress towards establishing the folding property (see Section~\ref{section:result3}). Here we use the well-known characterization of Boolean functions given by two conditions, namely Parseval's identity (Equation~(\ref{eqn: parseval})) and a condition attributed to Titsworth (Equation~(\ref{eqn: boolstruct})), in conjunction with combinatorial arguments.

\subsection{Organization of this paper}
In Section~\ref{sec: prelims} we review some preliminaries and introduce the notation that we use in this paper. In this section we also introduce definitions and concepts that are needed to state our results formally. In Section~\ref{sec: contribs} we motivate and formally state our results, and discuss proof techniques. The formal proofs of our main results can be found in Sections~\ref{sec:firstthm},~\ref{sec: improved}, and~\ref{sec: nontrivial}.

\section{Notation and preliminaries}\label{sec: prelims}

All logarithms in this paper are taken with base 2. As is standard, we use the notation $f(n)=\widetilde{O}(h(n))$ ($f(n)=\widetilde{\Theta}(\cdot), f(n)=\widetilde{\Omega}(\cdot)$) to convey that there exists a constant $c\geq0$ such that that $f(n)=O(h(n)\log^c h(n))$ ($f(n)=\Theta(h(n)\log^c h(n))$, $f(n)=\Omega (h(n)\log^c h(n))$, respectively). We use the notation $[n]$ to denote the set $\bra{1, 2, \dots, n}$.  For any set $S$, we use the notation $\binom{S}{2}$ to denote the set of all subsets of $S$ of size exactly $2$. We abuse notation and denote a generic element of $\binom{S}{2}$ as $(a, b)$ rather than $\bra{a, b}$. When we use the notation $\E_{x \in X}[\cdot]$, the underlying distribution corresponds to $x$ being sampled uniformly at random from $X$. We use the symbol ``$+$'' to denote both coordinate-wise addition over $\f$ as well as addition over reals; the meaning in use will be clear from context. For sets $A, B \subseteq \f^n$, $A+B$ denotes the \emph{sumset} defined by $\bra{\alpha + \beta ~\middle|~  \alpha \in A, \beta \in B}$. For a set $A \subseteq \f^n$ and $\gamma \in \f^n$, we denote by $A + \gamma$ the set $A + \bra{\gamma}$. The above convention also extends to the symbol ``$\sum$''. For a set of vectors $\Gamma \in \f^n$, we define $\mathsf{span}~\Gamma$ to be the set of all $\f$-linear combinations of vectors in $\Gamma$, i.e., $\mathsf{span}~\Gamma=\bra{\sum_{\gamma \in \Gamma} c_\gamma\cdot \gamma ~\middle|~ c_\gamma \in \f\mbox{\ for\ }\gamma \in \Gamma}$.

Consider the vector space of functions from $\f^n$ to $\reals$, equipped with the following inner product.
\[
\langle f, g \rangle := \E_{x \in \f^n}[f(x)g(x)] = \frac{1}{2^n}\sum_{x \in \f^n}f(x)g(x).
\]
Let $x=(x_1,\ldots,x_n) \in \f^n$. For each $\alpha=(\alpha_1,\ldots,\alpha_n) \in \f^n$, define $\alpha(x) := \sum_{i = 1}^n\alpha_ix_i$ (mod 2), and the associated \emph{character} $\chi_\alpha : \f^n \to \pmone$ by $\chi_\alpha(x) := (-1)^{\alpha(x)}$. Observe that $\chi_\alpha(x)$ is the $\pm 1$-valued parity of the bits $\bra{x_i ~\middle|~ \alpha_i=1}$; due to this we will also refer to characters as parities. The set of parities $\bra{\chi_\alpha ~\middle|~  \alpha \in \f^n}$ forms an orthonormal (with respect to the above inner product) basis for this vector space. Hence, every function $f : \f^n \to \reals$ can be uniquely written as $f = \sum_{\alpha \in \f^n}\wh{f}(\alpha) \chi_\alpha$, where $\wh{f}(\alpha) = \langle f, \chi_\alpha \rangle = \E_{x \in \f^n} [f(x)\chi_\alpha(x)]$.
The coefficients $\bra{\wh{f}(\alpha) ~\middle|~ \alpha \in \f^n}$ are called the Fourier coefficients of $f$.

For any function $f : \f^n \to \pmone$ and any set $A \subseteq \f^n$, define the function $f \mid_{A} : A \to \pmone$ by $f \mid_{A}(x) = f(x)$ for all $x \in A$. In other words, $f \mid_A$ denotes the restriction of $f$ to $A$.

Throughout this paper, for any Boolean function $f : \f^n \to \pmone$, we denote by $\S$ the Fourier support of $f$, i.e.~$\S = \bra{\alpha \in \f^n ~\middle|~ \wh{f}(\alpha) \neq 0}$. We also denote by $k$ the Fourier sparsity of $f$, i.e.~$k = |\S|$. The dependence of $\S$ and $k$ on $f$ is suppressed and the underlying function will be clear from context.

The representation of Fourier coefficients as an expectation (over $x \in \f^n$) immediately yields the following observation about granularity of Fourier coefficients of Boolean functions.

\begin{observation}\label{obs: gran}
Let $f : \f^n \to \pmone$ be any Boolean function. Then, for all $\alpha \in \f^n$, $\wh{f}(\alpha)$ is an integral multiple of $1/2^n$.
\end{observation}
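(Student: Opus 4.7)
The plan is to unpack the definition of the Fourier coefficient and exploit the fact that the relevant quantities are all $\pm 1$-valued. From the preliminaries we have
\[
\wh{f}(\alpha) \;=\; \E_{x \in \f^n}[f(x)\chi_\alpha(x)] \;=\; \frac{1}{2^n}\sum_{x \in \f^n} f(x)\,\chi_\alpha(x).
\]
I would simply argue that each summand $f(x)\chi_\alpha(x)$ lies in $\pmone$: by hypothesis $f$ is Boolean, and by the definition $\chi_\alpha(x) = (-1)^{\alpha(x)}$ the character is itself $\pm 1$-valued. Hence the sum $\sum_{x \in \f^n} f(x)\chi_\alpha(x)$ is a sum of $2^n$ terms each equal to $\pm 1$, which is an integer. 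Dividing by $2^n$ then gives $\wh{f}(\alpha) \in \frac{1}{2^n}\,\Z$, which is exactly the claim.

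There is no real obstacle here; the only thing to verify is that the two facts about the ranges of $f$ and $\chi_\alpha$ are already in place from the setup, and they are. The entire proof is essentially a one-line unpacking of the Fourier inversion/analysis formula, and I would present it as such without additional machinery.
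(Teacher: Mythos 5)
Your proof is correct and is exactly the argument the paper has in mind: the text preceding the observation says it follows "immediately" from the expectation formula $\wh{f}(\alpha) = \frac{1}{2^n}\sum_{x\in\f^n} f(x)\chi_\alpha(x)$, and you have simply spelled out that each summand is $\pm 1$ so the sum is an integer. Nothing is missing.
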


We next define plateaued functions.
\begin{definition}[Plateaued functions]
A Boolean function $f : \f^n \to \pmone$ is said to be \emph{plateaued} if there exists $x \in \reals$ such that $\wh{f}(\alpha) \in \bra{0, x, -x}$ for all $\alpha \in \f^n$.
\end{definition}
Next we define the addressing function.
\begin{definition}[Addressing function]
\label{def:addressing}
Let $k$ be an even power of $2$. The addressing function $\mathsf{ADD}_k:\f^{\frac{1}{2}\log k + \sqrt{k}} \to \pmone$ is defined as
\[
\mathsf{ADD}_k(x,y_1,\ldots,y_{\sqrt{k}}):=(-1)^{y_{\mathrm{int}(x)}},
\]
where $x \in \f^{\frac{1}{2}\log k}, y_i \in \f$ for $i=1,\ldots,\sqrt{k}$, and $\mathrm{int}(x)$ is the unique integer in $\bra{1, \dots, \sqrt{k}}$ whose binary representation is $x$.
\end{definition}
The Fourier sparsity of $\mathsf{ADD}_k$ can be verified to be $k$.
We now define a notion of equivalence on elements of $\binom{\S}{2}$.
\begin{definition}
For any Boolean function $f : \f^n \to \pmone$, we say a pair $(\alpha_1, \alpha_2) \in \binom{\S}{2}$ is \emph{equivalent} to $(\alpha_3, \alpha_4) \in \binom{\S}{2}$ if $\alpha_1 + \alpha_2 = \alpha_3 + \alpha_4$.
\end{definition}
In the above definition, if $\alpha_1 + \alpha_2 = \alpha_3 + \alpha_4 = \gamma$, then we say that the pairs $(\alpha_1, \alpha_2)$ and $(\alpha_3, \alpha_4)$ \emph{fold in the direction} $\gamma$.
We also say that the elements $\alpha_1, \alpha_2, \alpha_3$, and $\alpha_4$ \emph{participate in the folding direction $\gamma$}.
It is not hard to verify that the notion of equivalence defined above does indeed form an equivalence relation. We will denote by $O_\gamma$ the equivalence class of pairs that fold in the direction $\gamma$, i.e.,
\[
O_\gamma := \bra{(\alpha, \beta) \in \binom{\S}{2} ~\middle|~  \alpha + \beta = \gamma}.
\]
We suppress the dependence of $O_\gamma$ on the underlying function $f$, which will be clear from context. 
Unless mentioned otherwise, these are the equivalence classes under consideration throughout this paper.

For any Boolean function $f : \f^n \to \pmone$, we have for each $x \in \f^n$:
\begin{equation}
1 = f^2(x) = \sum_{\gamma \in \f^n} \left(\sum_{(\alpha_1, \alpha_2) \in \f^n \times \f^n: \alpha_1+\alpha_2=\gamma}\widehat{f}(\alpha_1)\widehat{f}(\alpha_2)\right)\chi_\gamma(x). \label{eqn:coefmch}
\end{equation}

Matching the constant term of each side of the above identity we have 
\begin{equation}\label{eqn: parseval}
\sum_{\alpha \in \f^n}\wh{f}(\alpha)^2 = 1,
\end{equation}
which is commonly referred to as Parseval's identity for Boolean functions. By matching the coefficient of each non-constant $\chi_\gamma$ on each side of Equation~\eqref{eqn:coefmch} we obtain
\begin{equation}\label{eqn: boolstruct}
\forall \gamma \neq \emptyset, \sum_{(\alpha_1, \alpha_2) \in \f^n \times \f^n : \alpha_1+\alpha_2=\gamma}\widehat{f}(\alpha_1)\widehat{f}(\alpha_2)=0.
\end{equation}
Equation~\eqref{eqn: boolstruct} is attributed to Titsworth~\cite{Tit62}.
The following proposition is an easy consequence of Equation~\eqref{eqn: boolstruct}. It provides a necessary condition for a subset of $\f^n$ to be the Fourier support of a Boolean function.
\begin{proposition}\label{prop:tit}
Let $f : \f^n \to \pmone$ be a Boolean function.
Then, for all $(\alpha, \beta) \in \binom{\S}{2}$, there exists $(\gamma, \delta) \neq (\alpha, \beta) \in \binom{\S}{2}$ such that $\alpha + \beta = \gamma + \delta$.
In other words, $|O_{\alpha+\beta}| \geq 2$.
\end{proposition}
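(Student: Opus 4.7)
The plan is to apply the Titsworth identity (Equation~\eqref{eqn: boolstruct}) directly to the element $\gamma := \alpha + \beta$. Since $(\alpha,\beta) \in \binom{\S}{2}$ we have $\alpha \neq \beta$, and therefore $\gamma \neq \emptyset$, so Equation~\eqref{eqn: boolstruct} is indeed applicable at this particular $\gamma$.

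First I would rewrite the sum in Equation~\eqref{eqn: boolstruct}, which is over ordered pairs $(\alpha_1,\alpha_2) \in \f^n \times \f^n$ with $\alpha_1+\alpha_2 = \gamma$, as a sum over unordered pairs. Because $\gamma \neq \emptyset$, every such ordered pair satisfies $\alpha_1 \neq \alpha_2$, so each unordered pair $\{\alpha_1,\alpha_2\}$ contributes exactly twice to the sum. Next I would note that any term in which either $\alpha_1 \notin \S$ or $\alpha_2 \notin \S$ vanishes, since then one of the Fourier coefficients is $0$. Combining these two observations, Equation~\eqref{eqn: boolstruct} collapses to the statement
\[
\sum_{(\alpha_1,\alpha_2) \in O_\gamma} \wh{f}(\alpha_1)\wh{f}(\alpha_2) = 0.
\]

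To conclude, suppose for contradiction that $|O_\gamma| = 1$, so that $O_\gamma = \{(\alpha,\beta)\}$. Then the displayed identity reduces to $\wh{f}(\alpha)\wh{f}(\beta) = 0$, which is impossible since $\alpha,\beta \in \S$ means both factors are nonzero. Hence $|O_\gamma| \geq 2$, which is precisely the claim that there is a pair $(\gamma_1,\delta_1) \neq (\alpha,\beta)$ in $\binom{\S}{2}$ with $\gamma_1 + \delta_1 = \alpha+\beta$.

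There is no real obstacle here; the proof is essentially a bookkeeping exercise applied to the Titsworth identity. The one place to be slightly careful is the ordered-versus-unordered accounting (and the observation that $\alpha_1 \neq \alpha_2$ is automatic for $\gamma \neq \emptyset$), but this is routine and contributes only the harmless factor of $2$ that can be divided out.
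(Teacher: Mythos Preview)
Your proposal is correct and is precisely the argument the paper has in mind: the paper simply states that the proposition is an easy consequence of Equation~\eqref{eqn: boolstruct} without spelling out the details, and your write-up fills those in exactly as intended.
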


The Fourier $\ell_1$-norm of $f$ is defined as $\|\wh{f}\|_1 := \sum_{\alpha \in \f^n} |\wh{f}(\alpha)|$.
By the Cauchy-Schwarz inequality and Equation~\eqref{eqn: parseval}, we have
\begin{equation}\label{eqn: 1normk}
\|\wh{f}\|_1 \leq \sqrt{k}\sqrt{\sum_{\alpha \in \f^n}\wh{f}(\alpha)^2} = \sqrt{k}.
\end{equation}

We next formally define parity decision trees.

A \emph{parity decision tree} (PDT) is a binary tree whose leaf nodes are labeled in $\pmone$, each internal node is labeled by a parity $\chi_\alpha$ and has two outgoing edges, labeled $-1$ and $1$.
On an input $x \in \f^n$, the tree's computation proceeds from the root down as follows: compute $\chi_\alpha(x)$ as indicated by the node's label and following the edge indicated by the value output, and continue in a similar fashion until a reaching a leaf, at which point the value of the leaf is output. When the computation reaches a particular internal node, the PDT is said to \emph{query} the parity label of that node.
The PDT is said to compute a function $f : \f^n \to \pmone$ if its output equals the value of $f$ for all $x \in \f^n$.
The parity decision tree complexity of $f$, denoted $\PDT(f)$ is defined as
\[
\PDT(f) := \min_{T : T~\text{is a PDT computing}~f} \textnormal{depth}(T).
\]

\subsection{Restriction to an affine subspace}\label{sec:basic}

In this section we discuss the effect of restricting a function $f : \f^n \to \reals$ to an affine subspace, on the Fourier spectrum of $f$.

\begin{definition}
A set $V \subseteq \f^n$ is called an \emph{affine subspace} if there exist linearly independent vectors $\ell_1, \dots, \ell_t \in \f^n$ and $a_1, \dots, a_t \in \f$ such that $V = \bra{x \in \f^n ~\middle|~ \ell_i(x) = a_i~\forall i \in \{1,\ldots,t\}}$.
$t$ is called the \emph{co-dimension} of $V$.
\end{definition}

Consider a set $\Gamma:=\bra{\gamma_1, \ldots, \gamma_t}$ of vectors in $\f^n$. Define the set $\mathcal{G}:=\mathsf{span}\ \Gamma$, and let $\mathcal{C}:=\bra{\G+\beta ~\middle|~ \beta \in \f^n,~(\G+\beta)\cap\mathcal{S}\neq\emptyset}$ to be the cosets of $\G$ that have non-trivial intersection with $\mathcal{S}$. For each $C \in \C$, let $\alpha(C)$ denote an arbitrary but fixed element in $C \cap \S$. In light of this, we write the Fourier transform of $f$ as
\begin{equation}
\label{eqn-grouping}
f(x)=\sum_{C \in \mathcal{C}}\left(\sum_{\gamma \in \G} \widehat{f}(\alpha(C) + \gamma) \chi_\gamma(x)\right)\chi_{\alpha(C)}(x),
\end{equation}
For any such fixed $C$, the value of the sum $\sum_{\gamma \in \G} \widehat{f}(\alpha(C) + \gamma) \chi_\gamma(x)$ that appears in Equation~\eqref{eqn-grouping} is determined by the values $\gamma_1(x), \ldots, \gamma_t(x)$. We denote this sum by $P_C(\gamma_1(x), \ldots, \gamma_t(x))$.

\noindent For $\mathbf{b}:=(b_1, \ldots, b_t) \in \f^t$, let $H_\mathbf{b}$ be the affine subspace $\bra{x \in \f^n ~\middle|~ \gamma_1(x)=b_1, \ldots, \gamma_t(x)=b_t}$. It follows immediately that the Fourier transform of $f\mid_{H_\mathbf{b}}$ is given by
\begin{equation}
\label{eqn-restr}
f\mid_{H_\mathbf{b}}(x)=\sum_{C \in \mathcal{C}} P_C(b_1, \ldots, b_t) \chi_{\alpha(C)}(x).
\end{equation}
In particular, for each $\mathbf{b}$, the Fourier sparsity of $f\mid_{H_\mathbf{b}}$ is bounded above by $|\mathcal{C}|$. 

We note here that each element in $\S$ is mapped to a unique element in $\mathcal C$. The elements of $\mathcal C$ can thus be thought of as buckets that form a partition of $\S$. Keeping this view in mind we define the following.

\begin{definition}[Bucket complexity]\label{defn: balti}
Let $f : \f^n \to \pmone$ be any Boolean function. Consider a set of vectors $\Gamma = \bra{\gamma_1, \ldots, \gamma_t}$ in $\f^n$. Let $\mathcal{G}:=\mathsf{span}\ \Gamma$, and let $\C$ denote the set of cosets of $\G$ that have non-empty intersection with $\S$, that is, $\mathcal{C}:=\bra{\G+\beta~\middle|~ \beta \in \f^n,~(\G+\beta)\cap\mathcal{S}\neq\emptyset}$. Define the \emph{bucket complexity} of $f$ with respect to $\G$, denoted $\B(f, \G)$, as
\[
\B(f, \G) = |\C|.
\]
\end{definition}
We now make the following useful observation, which follows from Equation~\eqref{eqn-restr}.
\begin{observation}
\label{obs:bucktet<sp}
Let $\Gamma$ and $\G$ be as in Definition~\ref{defn: balti}. Let $\mathbf{b}=(b_1,\ldots, b_t) \in \f^t$ be arbitrary. Let $V$ be the affine subspace $\bra{x \in \f^n ~\middle|~ \gamma_1(x)=b_1,\ldots,\gamma_t(x)=b_t}$. Let $k'$ be the Fourier sparsity of $f\mid_V$. Then $k' \leq \B(f,\G)$.
\end{observation}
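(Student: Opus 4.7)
The claim is nearly immediate from Equation~\eqref{eqn-restr}, but a touch of care is needed to go from an expansion of $f\mid_V$ as a sum of characters on $\f^n$ to a bound on its Fourier sparsity as a function on the affine subspace $V$ itself. So the plan is essentially to apply Equation~\eqref{eqn-restr} and then verify that the at most $|\C|$ terms appearing in it really do correspond to distinct Fourier coefficients of $f\mid_V$.

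First, I would instantiate Equation~\eqref{eqn-restr} with $H_{\mathbf{b}} = V$ to write
\[
f\mid_V(x) = \sum_{C \in \C} P_C(b_1,\ldots,b_t)\,\chi_{\alpha(C)}(x) \qquad \text{for all } x \in V.
\]
This already exhibits $f\mid_V$ as a linear combination of $|\C|$ characters; deleting the $C$'s with $P_C(\mathbf{b}) = 0$ can only reduce the number of terms, giving a representation by at most $|\C|$ characters.

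Next I would justify that this representation is a genuine Fourier expansion of $f\mid_V$ as a function on $V$, i.e.~that distinct $C_1, C_2 \in \C$ yield characters $\chi_{\alpha(C_1)}$ and $\chi_{\alpha(C_2)}$ that are distinct when viewed as characters on $V$. The key point is that $V$ is a translate of the linear subspace $V_0 = \bra{x \in \f^n ~\middle|~ \gamma_i(x) = 0 \ \forall i}$, and $\chi_{\alpha(C_1)}$, $\chi_{\alpha(C_2)}$ agree up to a scalar on $V$ iff $\chi_{\alpha(C_1) + \alpha(C_2)}$ is constant on $V_0$, which happens iff $\alpha(C_1) + \alpha(C_2) \in \mathsf{span}~\Gamma = \G$, which in turn happens iff $C_1 = C_2$. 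Hence the $|\C|$ characters $\chi_{\alpha(C)}$, $C \in \C$, restrict to pairwise distinct (hence linearly independent) characters of $V$, and the coefficients appearing above are exactly the Fourier coefficients of $f\mid_V$ indexed by these characters.

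The conclusion $k' \leq \B(f, \G) = |\C|$ then follows immediately by counting the nonzero terms. There is no real obstacle here; the only thing to guard against is conflating the characters as functions on $\f^n$ (where two distinct $\alpha$'s certainly give distinct characters) with the characters as functions on $V$ (where one must rule out the possibility that restricting collapses two of them into the same Fourier mode of $V$), and this is handled by the coset argument above.
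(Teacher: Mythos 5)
Your proof is correct and follows the same route as the paper, which simply asserts that the observation ``follows from Equation~\eqref{eqn-restr}.'' The one thing you add is the explicit verification that distinct cosets $C_1, C_2 \in \C$ give distinct characters on $V$ via $\alpha(C_1) + \alpha(C_2) \notin \G$; the paper implicitly relies on this when it calls Equation~\eqref{eqn-restr} ``the Fourier transform of $f\mid_{H_\mathbf{b}}$,'' so you are filling in a detail rather than taking a different path. (Worth noting, though not a flaw: even without that distinctness check, the bound $k' \leq |\C|$ would still follow --- if two characters collapsed on $V$, collecting terms would only reduce the number of distinct characters in the representation, and the Fourier sparsity of $f\mid_V$ is always at most the number of distinct characters in any representation. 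The distinctness is needed to conclude the expansion is \emph{the} Fourier transform, but not for the inequality alone.)
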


\begin{definition}[Identification of characters]\label{defn: ident}
For $f, \G$, and $\C$ as in Definition~\ref{defn: balti} and any $\beta, \delta \in \S$, we say that \emph{$\beta$ and $\delta$ are identified with respect to $\G$} if $\beta+\delta \in \G$, or equivalently, if $\beta$ and $\delta$ belong to the same coset in $\C$.
\end{definition}

The following observation plays a key role in the results discussed in this paper.
\begin{observation}
\label{obs:identify}
Let $f, \G$ and $\C$ be as in Definition~\ref{defn: balti}. If there exists a set $L \subseteq \S$ of size $h$ such that each $\beta \in L$ is identified with some other $\delta \in \S$ with respect to $\G$, then $\mathcal B(f, \G) \leq k-\frac{h}{2}$.
\end{observation}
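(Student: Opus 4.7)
The plan is to count cosets in $\C$ via the identity
\[
|\C| = k - \sum_{C \in \C}(|C \cap \S| - 1),
\]
which holds because $\bra{C \cap \S ~\middle|~ C \in \C}$ partitions $\S$ and each $C \in \C$ contains at least one element of $\S$. To establish the bound, I would lower bound the ``savings'' $\sum_{C \in \C}(|C \cap \S| - 1)$ by $h/2$.

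Restricting the sum to those cosets that intersect $L$ and using only the hypothesis on $L$, I would split into two cases for each such coset $C$. If $|C \cap L| \geq 2$, then since $L \subseteq \S$, $|C \cap \S| - 1 \geq |C \cap L| - 1 \geq |C \cap L|/2$. If $|C \cap L| = 1$, say $C \cap L = \bra{\beta}$, the hypothesis provides some $\delta \in \S \setminus \bra{\beta}$ with $\beta + \delta \in \G$; by Definition~\ref{defn: ident}, $\delta$ lies in the same coset $C$, so $|C \cap \S| \geq 2$ and thus $|C \cap \S| - 1 \geq 1 = |C \cap L|$. In either case the per-coset savings is at least $|C \cap L|/2$.

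Summing over cosets intersecting $L$ and using that these cosets partition $L$ gives
\[
\sum_{C \in \C}(|C \cap \S| - 1) \geq \sum_{C \in \C : C \cap L \neq \emptyset} \frac{|C \cap L|}{2} = \frac{|L|}{2} = \frac{h}{2},
\]
and combining with the identity above yields $\B(f, \G) = |\C| \leq k - h/2$.

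The only subtle point is the case $|C \cap L| = 1$, which requires the ``other'' in the hypothesis (i.e.~$\delta \neq \beta$); without this we could not conclude $|C \cap \S| \geq 2$. Everything else is routine double counting, so I do not anticipate any real obstacle beyond making sure the case analysis is exhaustive.
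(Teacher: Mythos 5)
Your proof is correct. It rests on the same elementary double counting as the paper's, but organizes it differently: the paper directly partitions $\C$ into cosets that meet $\overline{L}$ (at most $k-h$ of them, since $|\overline{L}|=k-h$) and cosets that meet only $L$ (each of which has at least $2$ elements of $\S$ by the hypothesis, hence at most $h/2$ of them), then adds. You instead start from the exact identity $|\C| = k - \sum_{C\in\C}(|C\cap\S|-1)$ and lower bound the total ``savings'' by $h/2$ via a per-coset case analysis on $|C\cap L|$. Both are one-paragraph counting arguments; the paper's split is slightly more direct, while your savings framing cleanly isolates exactly where the hypothesis is needed (namely the cosets with $|C\cap L|=1$, as you note — for cosets with $|C\cap L|\geq 2$ the inclusion $L\subseteq\S$ already yields the required savings, and the hypothesis is not invoked). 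One small thing worth recording if you write this up formally: the savings for cosets that miss $L$ are nonnegative, which you need when you discard them before summing; this is immediate since every $C\in\C$ meets $\S$.
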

\begin{proof}
Since $|\overline{L}| = k-h$, there are at most $k-h$ cosets in $\C$ that contain at least one element from $\overline{L}$. Next, each coset in $\C$ that contains only elements from $L$ has at least 2 elements (by the hypothesis). Hence, the number of cosets containing only elements from $L$ is at most $h/2$. Combining the above two, we have that $|\C|\leq (k-h)+\frac{h}{2}=k-\frac{h}{2}$.
\end{proof}

\subsection{Folding properties of Boolean functions}\label{sec: foldingprelims}

\begin{definition}\label{defn: folding}
Let $f : \f^n \to \pmone$ be any Boolean function.
We say that $f$ is \emph{$(\delta, \ell)$-folding} if 
\[
\left|\bra{(\alpha, \beta) \in \binom{\S}{2} ~\middle|~ |O_{\alpha+\beta}| \geq k^{\ell} + 1}\right| \geq \delta\binom{k}{2}.
\]
\end{definition}
Proposition~\ref{prop:tit} implies that any Boolean function is $(1,0)$-folding.

We next show by a simple averaging argument that if $f$ has ``good folding properties'', then there are many $\alpha \in \S$, such that $|O_{\alpha+\beta}|$ is large for many $\beta \in \S \setminus \bra{\alpha}$.
\begin{claim}
\label{claim:folding}
Let $f : \f^n \to \pmone$ be $(\delta,\ell)$-folding with $k$ sufficiently large. Define
\[
U:=\bra{\alpha \in \S ~\middle|~ \mbox{there exist at least $\delta k/2$ many $\beta \in \S \setminus \bra{\alpha}$ with $|O_{\alpha+\beta}|\geq k^\ell+1$}}.
\]
Then $|U|\geq \frac{\delta k}{3}$.
\end{claim}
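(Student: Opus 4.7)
The plan is to use a standard double counting (averaging) argument on the bipartite incidence between elements of $\S$ and ``good'' pairs in $\binom{\S}{2}$.

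First I would set up notation. Let $G := \{(\alpha,\beta) \in \binom{\S}{2} : |O_{\alpha+\beta}| \geq k^\ell + 1\}$ denote the set of good pairs; by the $(\delta,\ell)$-folding hypothesis, $|G| \geq \delta\binom{k}{2}$. For each $\alpha \in \S$, define the degree $d(\alpha) := |\{\beta \in \S\setminus\{\alpha\} : (\alpha,\beta) \in G\}|$. Note that by definition $\alpha \in U$ iff $d(\alpha) \geq \delta k/2$, and every $\alpha \in \S$ satisfies the trivial bound $d(\alpha) \leq k-1$.

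Next I would carry out the double count. Each pair in $G$ is incident to exactly two elements of $\S$, so
\[
\sum_{\alpha \in \S} d(\alpha) \;=\; 2|G| \;\geq\; \delta\, k(k-1).
\]
Splitting the sum over $U$ and $\S \setminus U$, and using $d(\alpha) < \delta k/2$ for $\alpha \notin U$ together with $d(\alpha) \leq k-1$ for $\alpha \in U$, I obtain
\[
\delta\, k(k-1) \;\leq\; |U|(k-1) + (k-|U|)\cdot \tfrac{\delta k}{2}.
\]
Rearranging this inequality gives $|U|\bigl[(k-1) - \delta k/2\bigr] \geq \delta k\,(k/2 - 1)$, i.e.
\[
|U| \;\geq\; \frac{\delta\, k\,(k/2 - 1)}{(k-1) - \delta k/2}.
\]

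Finally I would conclude by taking $k$ large enough. For $k \geq 6$ the numerator is at least $\delta k \cdot (k/3)$, and the denominator is at most $k$, which yields $|U| \geq \delta k/3$ as required. There is no real obstacle here; the entire argument is a one-shot averaging, and the slack factor of $1/3$ (versus the roughly $1/2$ the above calculation actually gives) comfortably absorbs lower-order terms.
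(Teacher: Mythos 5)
Your proof is correct and takes essentially the same averaging/double-counting approach as the paper: both sum the ``degree'' $d(\alpha)$ (the paper calls it $t(\alpha)$) over $\alpha \in \S$, bound the sum below by $\delta k(k-1)$ and above by splitting over $U$ and its complement, then rearrange. The only cosmetic difference is that you use the slightly tighter trivial bound $d(\alpha) \leq k-1$ where the paper uses $d(\alpha) \leq k$; the conclusion is identical.
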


\begin{proof}
For each $\alpha \in \S$, define $t(\alpha):=|\bra{\beta \in \S \setminus \bra{\alpha} ~\middle|~ |O_{\alpha+\beta}|\geq k^\ell+1}|$. By the hypothesis, $\sum_{\alpha \in \S}t(\alpha) \geq \delta k(k-1)$. We have
\begin{align*}
& |U|\cdot k+(k-|U|)\cdot\frac{\delta k}{2} \geq \sum_{\alpha \in \S}t(\alpha) \geq \delta k(k-1)\\
\implies & |U|\left(k - \frac{\delta k}{2}\right) \geq \delta k^2 - \delta k - \frac{\delta k^2}{2} \implies |U| \geq \frac{\delta (k - 2)}{2 - \delta},
\end{align*}
implying $|U|\geq \frac{\delta k}{3}$ for sufficiently large $k$.

\end{proof}

\section{Our contributions}\label{sec: contribs}
In this section we give a high-level account of our contributions in this paper.
In Section~\ref{section:tsang} we discuss the PDT construction of Tsang et al. We motivate, state our results, and briefly discuss proof ideas in Sections~\ref{section:result1},~\ref{section:result2}, and~\ref{section:result3}.

\subsection{Low bucket complexity implies shallow PDTs}
\label{section:tsang}
The following lemma follows from~\cite[Lemma 28]{TWXZ13} and Equation~\eqref{eqn: 1normk}. 

\begin{lemma}[Tsang, Wong, Xie, and Zhang]
\label{lemma:tsang}
Let $f: \f^n \to \pmone$ be any Boolean function. Then there exists an affine subspace $V$ of $\f^n$ of co-dimension $O(\sqrt k)$ such that $f$ is constant on $V$.
\end{lemma}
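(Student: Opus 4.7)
The plan is to combine the Fourier $\ell_1$-norm bound from Equation~\eqref{eqn: 1normk}, namely $\|\wh f\|_1 \leq \sqrt k$, with an iterative restriction argument that reduces this norm by a constant at each step. More precisely, I would first prove the stronger statement that any Boolean function $g : \f^n \to \pmone$ admits an affine subspace $V$ of co-dimension $O(\|\wh g\|_1)$ on which $g$ is constant, and then specialize this to $f$ using the $\ell_1$-bound to obtain co-dimension $O(\sqrt k)$.

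The heart of the argument is a single-step reduction: for every non-constant Boolean $g$, there exist $\alpha \in \f^n \setminus \{0\}$ and $b \in \f$ such that the hyperplane restriction $g\mid_H$, where $H = \{x \in \f^n : \chi_\alpha(x) = (-1)^b\}$, is itself Boolean (by Equation~\eqref{eqn-restr}) and satisfies $\|\wh{g\mid_H}\|_1 \leq \|\wh g\|_1 - \Omega(1)$. The choice of $\alpha$ and $b$ is guided by the following: after the restriction, Fourier coefficients collapse in pairs across cosets $\{\beta, \beta + \alpha\}$ as $\wh g(\beta) + (-1)^b \wh g(\beta+\alpha)$, so $b$ is picked to force cancellation, and $\alpha$ is picked so that at least one such pair has both entries bounded away from zero. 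Such an $\alpha$ exists because Parseval combined with Cauchy--Schwarz yields $\|\wh g\|_\infty \geq 1/\|\wh g\|_1$, guaranteeing a significant Fourier coefficient at some non-constant character. Iterating this single step produces a nested sequence of hyperplane restrictions whose intersection is an affine subspace of co-dimension equal to the number of iterations; since $\|\wh g\|_1 \geq \|\wh g\|_2 = 1$ for every Boolean function, the process must terminate with a constant function within $O(\|\wh g\|_1)$ steps, yielding the claimed co-dimension bound.

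The main obstacle is ensuring the per-step reduction is truly $\Omega(1)$, independent of the current Fourier profile. The naive estimate, that cancellation at the single coset $\{0, \alpha\}$ reduces the $\ell_1$-norm by $2\min(|\wh g(0)|, |\wh g(\alpha)|)$, can be vanishingly small if both of these coefficients happen to be tiny. To convert this into a constant-sized decrease one must argue globally, either by aggregating cancellation contributions across many cosets simultaneously, or by selecting $\alpha$ via a more subtle criterion that is sensitive to the whole Fourier profile. This aggregation is precisely the technical content of Lemma~28 in Tsang et al.~\cite{TWXZ13}, which one could invoke directly as a black box. Once the general bound $O(\|\wh g\|_1)$ is in hand, substituting $\|\wh f\|_1 \leq \sqrt k$ from Equation~\eqref{eqn: 1normk} completes the proof.
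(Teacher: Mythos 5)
Your proposal coincides with the paper's own treatment: the paper does not prove this lemma internally but derives it immediately from Lemma~28 of Tsang et al.~\cite{TWXZ13} together with the $\ell_1$-norm bound $\|\wh{f}\|_1 \leq \sqrt{k}$ of Equation~\eqref{eqn: 1normk}, which is precisely what you do once you concede invoking Lemma~28 as a black box and substitute the $\ell_1$ bound. Your sketch of the iterative $\ell_1$-reduction behind Lemma~28 is a reasonable account of the TWXZ mechanism (the paper confirms it is proved ``by an iterative procedure in each step of which a single parity is carefully chosen''), and you correctly identify that the $\Omega(1)$ per-step decrease is the nontrivial part being outsourced.
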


\noindent Let $V=\bra{x \in \f^n ~\middle|~ \gamma_1(x)=b_1, \ldots, \gamma_t(x)=b_t}$ be the affine subspace $V$ obtained from Lemma~\ref{lemma:tsang}, where $t=O(\sqrt k)$. Define $\G:=\mathsf{span} \bra{\gamma_1, \ldots, \gamma_t}$. We next observe that $\mathcal B(f,\G) \leq k/2$. To see this, note that since $f \mid_V$ is constant, we have from Equation~\eqref{eqn-restr} that for each coset $C \in \mathcal C$ and any $(b_1, \dots, b_t) \in \f^t$,
\[
P_C(b_1, \ldots, b_t)=
\begin{cases}
\pm 1 & \textnormal{if}~0^n \in C\\
0 & \textnormal{otherwise}.
\end{cases}
\]
Since $f$ is a non-constant function, this implies that each $P_C(\cdot)$ has at least $2$ terms, i.e., each $\beta \in \S$ is identified with some other $\delta \in \S$ with respect to $\G$. Observation~\ref{obs:identify} implies that $\mathcal B(f,\G) \leq k/2$. Observation~\ref{obs:bucktet<sp} implies that the Fourier sparsity of the restriction of $f$ to each coset of $V$ is at most $k/2$.

This immediately leads to a recursive construction of a PDT for $f$ of depth $O(\sqrt k)$ as follows. The first step is to query the parities $\gamma_1, \ldots, \gamma_t$. After this step, each leaf of the partial tree obtained is a restriction of $f$ to some coset of $V$. Next we recursively compute each leaf. Since after each batch of queries, the sparsity reduces by a factor of $2$, the depth of the tree thus obtained is $O\left(\sqrt{k}+\sqrt{\frac{k}{2}}+\sqrt{\frac{k}{2^2}}+\cdots\right)=O(\sqrt{k})$.

\subsection{A random set of parities achieves low bucket complexity}
\label{section:result1}
Tsang et al.~proved Lemma~\ref{lemma:tsang} by an iterative procedure in each step of which a single parity is carefully chosen. We show in this paper that a randomly sampled set of parities achieves the desired bucket complexity upper bound with high probability. More specifically, for a parameter $p \in [0,1]$, consider the procedure {\sc SampleParity($f,p$)} described in Algorithm~\ref{algo1}.
\begin{algorithm}
\caption{}\label{algo1}
\begin{algorithmic}[h]
\Procedure{SampleParity ($f,p$)}{}
\State $\mathcal{R} \gets \emptyset$;
\For{each $\alpha \in \S$}
\State independently with probability $p, \mathcal R \gets \mathcal R \cup \bra{\alpha}$;
\EndFor
\State Return $\mathcal{R}$;
\EndProcedure
\end{algorithmic}
\end{algorithm}
Our first result shows that the set $\mathcal R$ returned by {\sc SampleParity$\left(f,\frac{1}{\Theta(\sqrt{k})}\right)$} satisfies $\mathcal B(f, \mathsf{span} \ \R) \leq k/2$ with high probability.
\begin{theorem}
\label{thm:thm1}
Let $f:\f^n \rightarrow \pmone$ be a Boolean function and $k$ be large enough. Let $p=\frac{1}{2k^{1/2}}$ and $\mathcal R$ be the random set of parities returned by {\sc SampleParity}($f,p$). There exists a constant $c \in [0,1)$ such that
\[
\E[\mathcal{B}(f,\mathsf{span} \ \mathcal R)] \leq ck.
\]
\end{theorem}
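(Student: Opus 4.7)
The plan is to use Observation~\ref{obs:identify}: letting $W \subseteq \S$ denote the set of elements identified with respect to $\mathsf{span}\ \R$, we have $\B(f, \mathsf{span}\ \R) \leq k - |W|/2$, and taking expectations reduces the theorem to showing $\E[|W|] \geq \epsilon k$ for some constant $\epsilon > 0$ (then $c = 1 - \epsilon/2 \in [0, 1)$).

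To lower-bound $\E[|W|] = \sum_{\alpha \in \S} \Pr[\alpha \in W]$, fix $\alpha \in \S$ and aim for $\Pr[\alpha \in W] = \Omega(1)$. The key input is Proposition~\ref{prop:tit}: for each $\delta \in \S \setminus \{\alpha\}$ there is a witness pair $(\gamma_\delta, \epsilon_\delta) \in \binom{\S \setminus \{\alpha, \delta\}}{2}$ with $\gamma_\delta + \epsilon_\delta = \alpha + \delta$. If $\gamma_\delta, \epsilon_\delta \in \R$, then $\alpha + \delta \in \mathsf{span}\ \R$, so $\alpha \in W$. Defining $X_\alpha := |\{\delta \in \S \setminus \{\alpha\} : \gamma_\delta, \epsilon_\delta \in \R\}|$, we have $\{X_\alpha \geq 1\} \subseteq \{\alpha \in W\}$ and $\E[X_\alpha] = (k-1) p^2 = \Theta(1)$. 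The idea is to upgrade this first-moment bound to $\Pr[X_\alpha \geq 1] = \Omega(1)$ via the Paley-Zygmund inequality, which needs $\E[X_\alpha^2] = O(\E[X_\alpha]^2) = O(1)$.

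The main technical obstacle is controlling $\E[X_\alpha^2]$. Pairs of witness events (for distinct $\delta, \delta'$) with all four element slots distinct contribute $p^4 = \Theta(k^{-2})$ each and total $O(k^2 \cdot p^4) = O(1)$, which is fine. The dangerous contributions come from pairs of witness events sharing a common element, each contributing $p^3 = \Theta(k^{-3/2})$; the total count of such pairs equals $\sum_{\gamma \in \S} \binom{u(\gamma)}{2}$, where $u(\gamma)$ is the number of $\delta$'s for which $\gamma \in \{\gamma_\delta, \epsilon_\delta\}$. Since $\sum_\gamma u(\gamma) = 2(k-1)$ over at most $k-1$ candidate elements, the average $u(\gamma)$ is $2$; I would accordingly choose the witness pairs $(\gamma_\delta, \epsilon_\delta)$ via a greedy or matching argument on the bipartite structure of admissible witnesses so that $\max_\gamma u(\gamma)$ is $O(1)$, which forces $\sum_\gamma \binom{u(\gamma)}{2} = O(k)$ and the shared-element contribution to $\E[X_\alpha^2]$ down to $O(k \cdot p^3) = o(1)$. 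Combined with Paley-Zygmund this yields $\Pr[\alpha \in W] = \Omega(1)$ uniformly in $\alpha \in \S$, hence $\E[|W|] = \Omega(k)$, completing the proof with some constant $c < 1$.
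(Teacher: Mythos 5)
Your reduction to \(\E[|W|]=\Omega(k)\) via Observation~\ref{obs:identify}, the first-moment computation \(\E[X_\alpha]=\Theta(1)\), and the structure of the second-moment bound are all sound. The gap is precisely the step you defer to ``a greedy or matching argument'': the claim that one can always pick the witness pair \((\gamma_\delta,\epsilon_\delta)\) for each \(\delta\) so that \(\max_\gamma u(\gamma)=O(1)\). This fails, because for some functions and some \(\alpha\) the witness pair is \emph{unique} for almost every \(\delta\), and all of these forced pairs share a single element. The paper itself exhibits such a function: for the \(g\) in Part~\ref{item: existsfunction} of Theorem~\ref{thm: singledirectionintro} with \(\alpha=\{z_1\}\), \(\beta=\{z_2\}\), one has \(|O_{\alpha+\gamma}|=2\) for every \(\gamma\in\S'\setminus\{\alpha,\beta\}\), and the unique non-\((\alpha,\gamma)\) pair in each such class contains \(\beta\). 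So \(u(\beta)=k-2\) no matter how you choose, hence \(\sum_\gamma\binom{u(\gamma)}{2}=\Theta(k^2)\) and the shared-element term in \(\E[X_\alpha^2]\) is \(\Theta(k^2p^3)=\Theta(\sqrt k)\); Paley--Zygmund then only gives \(\Pr[X_\alpha\geq 1]=\Omega(1/\sqrt k)\). In fact for this example the situation is worse than a weak second moment: with a single committed witness per \(\delta\), the event \(\{X_\alpha\geq 1\}\) has probability \(O(p)=o(1)\), since conditioned on the high-probability event \(\beta\notin\R\) only the one pair you kept from \(O_{\alpha+\beta}\) can fire. The true identification probability is nonetheless \(\Omega(1)\), but it comes from \(O_{\alpha+\beta}\) containing \(\Theta(k)\) pairwise disjoint pairs, which a single-witness-per-\(\delta\) count discards. (The paper's one-witness-per-\(\beta\) definition of \(Q_\beta\) is safe because its two-stage scheme looks for \(\beta'\in\R_1\), \(\gamma\in\R_2\) with \(\beta'+\gamma=\alpha+\delta\), which in this example reduces to finding some \(T\) with \(\alpha\cup T\in\R_1\) and \(\beta\cup T\in\R_2\) --- exactly the many disjoint pairs in \(O_{\alpha+\beta}\).)

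The paper's proof does not try to make all degrees small. It uses Markov's inequality to bound the number of parities with \(\wdeg(\gamma)\geq 4\sqrt k\) by \(\sqrt k/2\), shows that at most \(k/8\) of the \(\beta\)'s have their entire witness pair inside this heavy set, and then runs a two-stage sampling \(\R'=\R_1\cup\R_2\). After the first stage a Chernoff bound (available because each \(\beta\) is attached to a distinguished \emph{low}-degree element of its witness pair, making the covering events \(\mathcal E(\gamma)\) independent across \(\gamma\)) gives \(\Omega(\sqrt k)\) low-degree parities ``covered'' by \(\R_1\); the second stage then only has to hit one of them. Proposition~\ref{prop: calculus} is exactly where the cap \(\d(\gamma)\leq 4\sqrt k\) is used. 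If you want to keep a second-moment flavour, you would need to restrict \(X_\alpha\) to \(\delta\)'s whose witness pair contains a low-degree element, bound the number of excluded \(\delta\)'s (as the paper does), and in degenerate cases allow multiple witness pairs per \(\delta\). The paper's warm-up (Claim~\ref{clm:klemone}) is the closest to your approach --- it splits into a ``heavy \(\gamma\)'' case and a greedily-disjointified case --- but it loses a factor and only gets \(\widetilde O(k^{3/4})\); recovering \(O(\sqrt k)\) requires the two-stage sampling argument.
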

With high probability we have $|\R| = O(\sqrt{k})$. By an argument analogous to the discussion in the previous section, Theorem~\ref{thm:thm1} recovers the $O(\sqrt{k})$ upper bound on $\PDT(f)$.
An additional insight that our work provides is that a PDT of depth $O(\sqrt{k})$ can be obtained by a naive sampling procedure applied iteratively.

We note here that while Tsang et al.~prove a bucket complexity upper bound of $k/2$ via Lemma~\ref{lemma:tsang} which restricts the function to a constant, we derive a bucket complexity upper bound of $(1-\Omega(1))k$ by analyzing the 
procedure {\sc SampleParity}.

\paragraph*{Proof idea.}
 Fix any $\alpha \in \S$. Proposition~\ref{prop:tit} implies that for every $\beta \in \S\setminus\bra{\alpha}$, there exists $(\gamma, \delta)\in \binom{\S}{2}\setminus \bra{(\alpha, \beta)}$ such that $\alpha +\beta=\gamma+\delta$. Observe that if two parities in the set $A:=\bra{\beta, \gamma, \delta}$ are chosen in $\R$, then $\alpha$ is identified with the third parity in $A$ w.r.t.~$\mathsf{span}~\R$. Now, the expected number of $\beta \in \S\setminus\bra{\alpha}$ for which the aforementioned identification occurs is seen by linearity of expectation to be $\Omega(kp^2)$, which is $\Omega(1)$ by the choice of $p$. The crux of the proof is in strengthening this bound on expectation to conclude that with constant probability, there exists at least one $\beta \in \S\setminus \bra{\alpha}$ such that the above identification occurs. Theorem~\ref{thm:thm1} follows by linearity of expectation over $\alpha \in \S$, and an invocation of Observation~\ref{obs:identify}.

We prove Theorem~\ref{thm:thm1} in Section~\ref{sec:thm1}. In Section~\ref{sec:warmup} we prove a weaker statement that admits a simpler proof, and yet contains some key ideas that go into the proof of Theorem~\ref{thm:thm1}. \subsection{Good folding yields better PDTs}
\label{section:result2}
Assume that for any Boolean function $f$ there exist $\alpha_1, \alpha_2 \in \S$ such that $|O_{\alpha_1 + \alpha_2}| \geq k^\ell + 1$. This is a weaker assumption on $f$ than it being $(\delta,\ell)$-folding. Observation~\ref{obs:identify} implies that $\B(f,\{0^n,\alpha_1+\alpha_2\}) \leq k-k^\ell-1 \leq k(1-k^{-(1-\ell)})$. This suggests the following PDT for $f$. First the parity $\alpha_1+\alpha_2$ is queried at the root. Observation~\ref{obs:bucktet<sp} implies that the Fourier sparsity of $f$ restricted to the affine subspace (of co-dimension 1) corresponding to each outcome of this query is at most $k(1-k^{-(1-\ell)})$. Repeating this heuristic recursively for each leaf leads to a PDT of depth $O(k^{1-\ell}\log k)$.

We have now set up the backdrop to introduce our next contribution. In the preceding discussion we had assumed the following about any Boolean function $f$: there exists a pair in $\binom{\S}{2}$ with a large equivalence class. One implication of our next result is that if we instead assume that any Boolean function is $(\Omega(1),\ell)$-folding, the procedure {\sc SampleParity} with $p$ set to $1/{\widetilde{\Theta}(k^{({1+\ell)}/{2}})}$ achieves a bucket complexity upper bound of $k/2$ with high probability.
By an argument analogous to the discussion in Section~\ref{section:tsang} (also see Corollary~\ref{cor: foldingpdt}), this yields a PDT with depth $\widetilde{O}(k^{(1-\ell)/2})$. This is a quadratic improvement over the $\widetilde{O}(k^{1-\ell})$ bound discussed in the last paragraph. Besides, it can be seen to recover (up to a logarithmic factor) our first result by setting $\ell=0$, since any Boolean function is $(1,0)$-folding by Proposition~\ref{prop:tit}.

\begin{theorem}\label{thm: folding}
Let $0\leq\ell\leq1-\Omega(1)$ and $\delta\in(0,1]$. Let $f: \f^n \rightarrow \pmone$ be $(\delta,\ell)$-folding with $k$ sufficiently large. Set $p:=\frac{4000 \log k}{\delta k^{(1+\ell)/2}}$ and let $\R$ be the random subset of $\S$ that {\sc SampleParity}($f,p$) returns. Then with probability at least $1-\frac{1}{k}$, $\B(f, \mathsf{span}~\R) \leq k - \frac{\delta k}{6}$. 
\end{theorem}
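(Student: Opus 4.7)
The plan is to combine Claim~\ref{claim:folding} with Observation~\ref{obs:identify}. First invoke Claim~\ref{claim:folding} to extract a set $U \subseteq \S$ with $|U| \geq \delta k/3$ such that each $\alpha \in U$ has a set $B_\alpha \subseteq \S \setminus \{\alpha\}$ of size at least $\delta k/2$ satisfying $|O_{\alpha+\beta}| \geq k^\ell + 1$ for every $\beta \in B_\alpha$. If every $\alpha \in U$ is identified with respect to $\mathsf{span}~\R$, then Observation~\ref{obs:identify} applied with $L := U$ of size $h = |U| \geq \delta k/3$ immediately yields $\B(f,\mathsf{span}~\R) \leq k - h/2 \leq k - \delta k / 6$. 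Thus the task reduces to showing that, with probability at least $1 - 1/k$, every element of $U$ is identified.

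For a fixed $\alpha \in U$, a sufficient condition for $\alpha$ to be identified is the existence of $\beta \in B_\alpha$ and a pair $(\gamma,\delta) \in O_{\alpha+\beta} \setminus \{(\alpha,\beta)\}$ with $\gamma,\delta \in \R$, since then $\alpha + \beta = \gamma + \delta \in \mathsf{span}~\R$. Collecting these ``witness pairs'' into
\[
W_\alpha := \bigcup_{\beta \in B_\alpha}\bigl(O_{\alpha+\beta}\setminus\{(\alpha,\beta)\}\bigr)
\]
and using that the sets $O_{\alpha+\beta}$ are pairwise disjoint across distinct $\beta \in B_\alpha$ (as $\alpha+\beta$ determines $\beta$), one gets $|W_\alpha| \geq |B_\alpha|\cdot k^\ell \geq \delta k^{1+\ell}/2$. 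Let $X_\alpha$ count the witness pairs in $W_\alpha$ fully contained in $\R$; the chosen $p$ yields $\E[X_\alpha] = |W_\alpha|\, p^2 \geq (4000)^2 \log^2 k / (2\delta)$, comfortably exceeding $\log k$.

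The technical heart is to upgrade this expectation into $\Pr[X_\alpha = 0] \leq 1/k^2$, after which a union bound over the at most $|U| \leq k$ elements of $U$ concludes, and Observation~\ref{obs:identify} delivers the theorem. The key structural fact I would exploit is that for each fixed $\beta \in B_\alpha$ the pairs in $O_{\alpha+\beta}$ form a matching on $\S$ (two distinct pairs summing to the same element are necessarily vertex-disjoint), so the events $\{(\gamma,\delta) \subseteq \R\}$ indexed by pairs within a single $O_{\alpha+\beta}$ are mutually independent, already yielding $\Pr[\text{no witness from this single } O_{\alpha+\beta}] \leq (1-p^2)^{k^\ell}$. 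The main obstacle is the cross-dependence that arises because a single vertex $v \in \S$ may participate in many of the matchings $O_{\alpha+\beta}$ for different $\beta \in B_\alpha$; I would address this through a Janson-type inequality on the elementary pair-containment events $\{(\gamma,\delta) \subseteq \R\}_{(\gamma,\delta) \in W_\alpha}$, controlling the pairwise-interference term $\Delta \leq p^3 \sum_v \binom{d_v}{2}$ (with $d_v$ the degree of $v$ in the graph $W_\alpha$) using the $\log^2 k$ slack designed into the choice of $p$. As a fallback, one can instead partition $B_\alpha$ greedily into subgroups whose matchings are jointly vertex-disjoint and apply a standard Chernoff bound independently within each subgroup.
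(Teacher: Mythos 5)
Your outer shell matches the paper exactly: invoke Claim~\ref{claim:folding} to get $|U|\geq \delta k/3$, show each $\alpha\in U$ is identified w.r.t.~$\mathsf{span}~\R$ with probability $\geq 1-k^{-2}$, union bound over $U$, and close with Observation~\ref{obs:identify}. You also correctly isolate cross-dependence among the matchings $O_{\alpha+\beta}$ as the obstacle. But your plan for overcoming that obstacle has a genuine gap: the $\Delta$ term in Janson is \emph{not} controlled by the $\log^2 k$ slack in $p$. A single vertex $v\in\S$ may participate in up to $|B_\alpha|=\Theta(\delta k)$ pairs of $W_\alpha$ (one per $\beta$), so the degree distribution in $W_\alpha$ can be heavy-tailed, and a short computation shows $\Delta \le p^3\sum_v\binom{d_v}{2}$ can be as large as $\Theta(\log^3 k\cdot k^{(1-\ell)/2})$ — polynomially large in $k$ for every admissible $\ell$ — while $\mu=\Theta(\log^2 k)$. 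Neither $e^{-\mu+\Delta/2}$ nor the extended bound $e^{-\mu^2/(2\Delta)}$ gives anything useful. The greedy-partitioning fallback has the same disease: one heavy vertex can force many $\beta$'s into distinct color classes, the classes can each be $O(1)$-sized, and classes are not vertex-disjoint from one another, so there is no independence across classes to combine.

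The paper closes the gap by (a) discarding the small set $A$ of vertices with $\wdeg > \Theta(k^{(1+\ell)/2})$ (Markov gives $|A|=O(\delta k^{(1+\ell)/2})$, and a constant fraction of pairs survive with at least one light endpoint), and (b) — crucially — switching to a two-stage sampling argument mirroring Theorem~\ref{thm:thm1} rather than Janson: $\R_1$ first picks $\Theta(k^{(1-\ell)/2}\log k)$ source parities $\beta$, a degree-boundedness event ensures this induces a large set $Y$ of light vertices $\gamma$ that are hit, and $\R_2$ then picks some $\gamma\in Y$. Identification follows from $\alpha+\mu=\beta+\gamma\in\mathsf{span}~\R'$ where $(\gamma,\mu)\in P_\beta$, so the argument never needs \emph{both} endpoints of a witness pair in $\R$ — only the source $\beta$ and a light endpoint $\gamma$. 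Even after step (a) your one-stage Janson plan would still struggle, because surviving pairs may retain one heavy endpoint whose contribution to $\Delta$ remains polynomial; the two-stage decoupling is the key idea you are missing.
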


The proof of Theorem~\ref{thm: folding} proceeds along the lines of that of Theorem~\ref{thm:thm1}, but is more technical. We prove it in Section~\ref{sec: improved}.

This yields the following corollary.
\begin{corollary}\label{cor: foldingpdt}
Let $0\leq\ell\leq1-\Omega(1)$ and $\delta=\Omega(1)$. Suppose all Boolean functions $f: \f^n \rightarrow \pmone$ with sufficiently large $k$ are $(\delta,\ell)$-folding.
Then,
\[
\PDT(f) = \widetilde{O}(k^{(1-\ell)/2}).
\]
\end{corollary}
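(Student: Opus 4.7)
The plan is to use Theorem~\ref{thm: folding} repeatedly in a recursive top-down construction of a PDT, mirroring the blueprint at the end of Section~\ref{section:tsang}, with the improvement that at each level we reduce the sparsity by a constant factor after only $\widetilde{O}(k^{(1-\ell)/2})$ queries (rather than $O(\sqrt{k})$ queries).

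\textbf{Step 1 (one level of the recursion via the probabilistic method).} Let $f: \f^n \to \pmone$ have sparsity $k$ large enough for Theorem~\ref{thm: folding} to apply. By hypothesis $f$ is $(\delta,\ell)$-folding. Run {\sc SampleParity}$(f,p)$ with $p = 4000\log k / (\delta k^{(1+\ell)/2})$. The expected size of $\R$ is $p k = O(k^{(1-\ell)/2}\log k)$, so by a Chernoff bound $|\R| = O(k^{(1-\ell)/2}\log k)$ with probability at least $1 - 1/k$. Combined with the failure probability $\le 1/k$ from Theorem~\ref{thm: folding}, a union bound shows that there \emph{exists} a particular set $R \subseteq \S$ satisfying both
\[
|R| = O(k^{(1-\ell)/2}\log k) \quad \text{and} \quad \B(f, \mathsf{span}\, R) \leq \left(1 - \tfrac{\delta}{6}\right)k.
\]

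\textbf{Step 2 (building the tree).} Let $R = \{\gamma_1,\ldots,\gamma_t\}$ with $t = O(k^{(1-\ell)/2}\log k)$. Query all $t$ parities at the top of the tree, producing $2^t$ leaves, one per $\mathbf{b}\in \f^t$. Each leaf corresponds to a restriction $f\mid_{H_\mathbf{b}}$ of $f$ to an affine subspace of co-dimension $t$. By Observation~\ref{obs:bucktet<sp}, the Fourier sparsity $k'$ of $f\mid_{H_\mathbf{b}}$ satisfies $k' \le \B(f, \mathsf{span}\, R) \le (1 - \delta/6)k$. Each restriction is itself a Boolean function on (effectively) some $\f^{n-t}$, so by hypothesis it is again $(\delta,\ell)$-folding as long as its sparsity is still ``sufficiently large.'' Recurse on each leaf.

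\textbf{Step 3 (summing up depths).} Let $c := 1 - \delta/6 < 1$. After $i$ levels of recursion, the sparsity of each surviving subfunction is at most $c^i k$, and the batch of queries made at the $i$-th level contributes $O((c^i k)^{(1-\ell)/2} \log(c^i k))$ to the depth along any root-to-leaf path. The recursion continues until the sparsity drops below the ``sufficiently large'' threshold $k_0$ needed for Theorem~\ref{thm: folding}; this takes $O(\log k)$ levels, after which we finish each remaining subtree with a trivial PDT of constant depth (any Boolean function of sparsity $k_0$ has PDT depth $O(k_0) = O(1)$). The total depth along any root-to-leaf path is thus
\[
\sum_{i=0}^{O(\log k)} O\left((c^i k)^{(1-\ell)/2}\log k\right) + O(1) = O(k^{(1-\ell)/2}\log k) \cdot \sum_{i\ge 0} (c^{(1-\ell)/2})^i.
\]
Since $0 \le \ell \le 1 - \Omega(1)$ and $c < 1$ are constants, $c^{(1-\ell)/2}$ is a constant strictly less than $1$ and the geometric series converges. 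Hence the overall depth is $\widetilde{O}(k^{(1-\ell)/2})$, as claimed.

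\textbf{Expected main obstacle.} The only subtle point is ensuring that $\widetilde{O}$ still holds after summing over $O(\log k)$ recursive levels: we need the geometric ratio $c^{(1-\ell)/2}$ to be bounded below $1$ by a constant independent of $k$. This is exactly where the hypotheses $\delta = \Omega(1)$ and $\ell \le 1 - \Omega(1)$ are used; a routine matter, but worth checking carefully. A minor technicality is that the derandomizing probabilistic argument in Step~1 must be applied to the specific sparsity-$k'$ sub-instance at each leaf, which is fine because the hypothesis applies uniformly to all Boolean functions with sufficiently large support.
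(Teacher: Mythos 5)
Your proof is correct and follows essentially the same approach as the paper's: apply the probabilistic method via Theorem~\ref{thm: folding} to find a small set $R$ that both has size $\widetilde{O}(k^{(1-\ell)/2})$ and reduces the bucket complexity by a constant factor, then recurse and sum the resulting geometric series. You are a bit more explicit than the paper about the base case (finishing with a trivial constant-depth PDT once sparsity drops below the threshold) and about verifying the geometric ratio $c^{(1-\ell)/2}$ is bounded away from $1$, but the argument is the same.
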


\begin{proof}
Fix any Boolean function $f : \f^n \to \pmone$ with sufficiently large $k$. Let $p$ and $\R$ be as in the statement of Theorem~\ref{thm: folding}. Since $\delta$ is a constant, $p=\Theta\left(\frac{\log k}{k^{(1+\ell)/2}}\right)$. By Theorem~\ref{thm: folding}, we have $\B(f, \mathsf{span}~\R) \leq ck$, for some $c=(1-\Omega(1))$, with probability strictly greater than $1/2$. By a Chernoff bound $|\R|=\widetilde{O}(k^{(1-\ell)/2})$ with probability strictly greater than $1/2$. Finally, by a union bound, we have that with non-zero probability the set $\R$ returned by {\sc SampleParity}($f, p$) satisfies both $|\R|=\widetilde{O}(k^{(1-\ell)/2})$ and $\B(f, \mathsf{span}~\R) \leq ck$, for some $c=(1-\Omega(1))$. Choose such an $\R$ and consider the following PDT for $f$, whose construction closely follows the discussion in Section~\ref{section:tsang}.

First, query all parities in $\R$. Now, let $V$ be the affine subspace corresponding to an arbitrary leaf of this partial tree. By the properties of $\R$ and Observation~\ref{obs:bucktet<sp}, we have that the Fourier sparsity of $f\mid_V$ is at most $ck$. Repeat the same process inductively for each leaf. The depth of the resultant tree is at most $\widetilde{O}(k^{(1-\ell)/2}+(ck)^{(1-\ell)/2}+\cdots)=\widetilde{O}(k^{(1-\ell)/2})$.
\end{proof}
Corollary~\ref{cor: foldingpdt} naturally raises the question of whether all Boolean functions are $(\Omega(1), \Omega(1))$-folding.
\begin{question}
\label{qn:foldingquest1}
Do there exist constants $\ell, \delta \in (0,1]$ such that every Boolean function $f:\f^n \rightarrow \pmone$ is $(\delta, \ell)$-folding?
\end{question}
An affirmative answer to Question~\ref{qn:foldingquest1} in conjunction with Corollary~\ref{cor: foldingpdt} and the discussion in Section~\ref{sec:introduction} implies an upper bound on the communication complexity of XOR functions $F = f \circ \oplus$ that is polynomially smaller than the best known bound of $O(\sqrt{\mathsf{rank}(F)})$.

What is the largest $\ell$ for which all Boolean functions are $(\Omega(1),\ell)$-folding? The addressing function $\mathsf{ADD}_k$ (see Definition~\ref{def:addressing}) is $(1,1/2-o(1))$-folding, and not $(\Omega(1),\ell)$-folding for any $\ell \geq \frac{1}{2}$ (see Appendix~\ref{sec: addfold}). In light of this, we make the following conjecture.
\begin{conjecture}\label{conj: folding}
There exists a constant $\delta > 0$ such that any Boolean function $f : \f^n \to \pmone$ is $(\delta, 1/2-o(1))$-folding.
\end{conjecture}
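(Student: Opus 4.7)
The plan is to proceed by contradiction, combining the Titsworth identity (Equation~\eqref{eqn: boolstruct}) with additive combinatorics. Suppose for contradiction that for every constant $\delta > 0$ the conjecture fails for some Boolean function $f$, and fix such an $f$ with $k$ sufficiently large. Setting $r(\gamma) := |O_\gamma|$, the hypothesis says that at least a $(1 - \delta)$-fraction of the $\binom{k}{2}$ pairs fold in directions $\gamma$ with $r(\gamma) \leq k^{1/2 - o(1)}$. I would first translate this into a quantitative upper bound on the additive energy $E(\S) := \sum_\gamma r(\gamma)^2$; splitting the sum by whether $\gamma$ is heavy or light shows that $E(\S)$ is bounded away from its maximum $\Theta(k^3)$ by a factor of $\delta$ plus a lower-order term.

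Next, I would leverage the Titsworth identity $\sum_{(\alpha, \beta) \in O_\gamma} \widehat{f}(\alpha)\widehat{f}(\beta) = 0$ simultaneously over the many $\gamma$'s with $r(\gamma)$ small. For each such $\gamma$, this is a rigid multiplicative constraint involving only $O(k^{1/2 - o(1)})$ Fourier coefficients; granularity (Observation~\ref{obs: gran}), Parseval (Equation~\eqref{eqn: parseval}), and $\|\widehat{f}\|_1 \leq \sqrt{k}$ (Equation~\eqref{eqn: 1normk}) together leave very little slack in how these constraints can be satisfied. A promising route is to square and sum the Titsworth identities against a carefully chosen weighting and invoke a Balog--Szemer\'edi--Gowers-style structural result to extract a large subset $\S' \subseteq \S$ with small doubling. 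If $\S'$ sits inside few cosets of a low-dimensional affine subspace, then many pairs from $\S'$ must collide on the same direction, contradicting the small-$|O_\gamma|$ assumption.

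The main obstacle I foresee is the gap between \emph{necessary} and \emph{sufficient} Fourier conditions for being Boolean. The paper already notes that Proposition~\ref{prop:tit} is necessary but not sufficient (Theorem~\ref{thm: ceg}), and even combined with Parseval and granularity it falls short. The authors' own structural results (third bullet of the abstract) sharpen $|O_\gamma| \geq 2$ only to $|O_\gamma| \geq 3$ for many $\gamma$, corresponding to $\ell = \log_k 3 = o(1)$, which is far from the conjectural $\ell = 1/2 - o(1)$. Closing this exponential gap almost certainly requires a genuinely new Fourier identity capturing the $\pm 1$-valued nature of $f$ beyond Parseval and Titsworth — for instance, the triple-convolution identity coming from $f^3 = f$, which yields the relations $\sum_{\alpha + \beta + \delta = \gamma} \widehat{f}(\alpha)\widehat{f}(\beta)\widehat{f}(\delta) = \widehat{f}(\gamma)$ for every $\gamma$, imposing structured third-order constraints that Titsworth alone cannot see. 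Any valid argument must admit $\mathsf{ADD}_k$ as the tight extremal case, which provides a useful sanity check but also sharply constrains which identities can work. I expect converting the combinatorial and energy picture from the first two steps into a genuine contradiction — via such a higher-order identity — to be the hardest part of the program.
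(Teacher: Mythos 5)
The statement you were asked about is Conjecture~\ref{conj: folding}, which the paper does \emph{not} prove --- it is explicitly left open, and the paper's own contribution toward it (Theorem~\ref{thm: 3foldintro}) only shows that every $\alpha \in \S$ participates in at least one direction $\gamma$ with $|O_\gamma| \geq 3$. Your write-up is therefore correctly framed as a plan of attack rather than a proof, and you correctly identify both the shape of the problem and the rough size of the gap. One minor quantitative point you soft-pedal: Theorem~\ref{thm: 3foldintro} is deficient in \emph{both} parameters of Definition~\ref{defn: folding}, not just $\ell$ --- it guarantees only about $k/2$ pairs out of $\binom{k}{2}$, so it does not even give $(\Omega(1), \ell)$-folding for any $\ell > 0$, let alone the conjectured $\ell = 1/2 - o(1)$.

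The concrete misstep in your plan is the proposed ``new'' identity from $f^3 = f$. For a real-valued $f$ on $\f^n$, the equations $\widehat{f^2}(0^n) = 1$ (Parseval, Equation~\eqref{eqn: parseval}) and $\widehat{f^2}(\gamma) = 0$ for $\gamma \neq 0^n$ (Titsworth, Equation~\eqref{eqn: boolstruct}) together say precisely $f^2 \equiv 1$, which is the \emph{complete} characterization of $\pm 1$-valued $f$. Multiplying $f^2 = 1$ by $f$ gives $f^3 = f$, so the triple-convolution constraints $\sum_{\alpha+\beta+\delta = \gamma} \widehat{f}(\alpha)\widehat{f}(\beta)\widehat{f}(\delta) = \widehat{f}(\gamma)$ are formal consequences of Parseval and Titsworth (indeed, $f^3 = f$ pointwise is strictly weaker, holding for all $\{-1,0,1\}$-valued functions). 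They therefore cannot ``capture the $\pm 1$-valued nature of $f$ beyond Parseval and Titsworth,'' because there is nothing beyond Parseval and Titsworth to capture. What Theorem~\ref{thm: ceg} rules out as insufficient is the much weaker combinatorial shadow ``$|O_\gamma| \geq 2$ for all realized $\gamma$,'' not the full multiplicative Titsworth constraints themselves; the genuine difficulty, which your first two steps gesture at but do not resolve, is extracting strong structure (small doubling, large additive energy, collision counts of the right order) from the full system of Titsworth equations plus granularity, and no such extraction is carried out in your proposal. The additive-energy/BSG route is a reasonable thing to try, but as written it is a heuristic outline with no lemma proved, so there is no argument to check.
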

Assuming Conjecture~\ref{conj: folding}, Corollary~\ref{cor: foldingpdt} would imply an upper bound of $\widetilde{O}(\mathsf{rank}^{1/4+o(1)}(F))$ on the communication complexity of XOR functions $F = f \circ \oplus$.

\subsection{Boolean functions have non-trivial folding properties}
\label{section:result3}
Recall that Conjecture~\ref{conj: folding} states that any Boolean function is $(\delta, \ell)$-folding with $\delta = \Omega(1)$ and $\ell = 1/2 - o(1)$.
Also recall from Proposition~\ref{prop:tit} that a necessary condition for a function to be Boolean valued is that it is $(\delta, \ell)$-folding with $\delta = 1$ and $\ell = 0$. We show in the appendix (see Theorem~\ref{thm: ceg}) that the conditions in Proposition~\ref{prop:tit} are not sufficient for a function to be Boolean valued.

To the best of our knowledge, it was not known prior to our work whether any better bound than this was known for Boolean functions (in terms of $\ell$, for \emph{any} non-zero $\delta$).
In particular, it was consistent with prior knowledge that there exist functions for which each equivalence class of $\binom{\S}{2}$ contains exactly 2 elements.
We rule out this possibility, and our contribution is a step towards Conjecture~\ref{conj: folding}.

\begin{theorem}\label{thm: 3foldintro}
For any Boolean function $f : \f^n \to \pmone$ with $k > 4$, and every $\alpha \in \S$, there exists $\beta \in \S \setminus \bra{\alpha}$ such that $|O_{\alpha + \beta}| \geq 3$.
\end{theorem}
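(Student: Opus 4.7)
The plan is to argue by contradiction. Suppose there exists $\alpha \in \S$ such that $|O_{\alpha+\beta}| = 2$ for every $\beta \in \S \setminus \{\alpha\}$ (equality by Proposition~\ref{prop:tit}). Then for each such $\beta$ the class $O_{\alpha+\beta}$ consists of $(\alpha,\beta)$ together with a unique partner pair $(\gamma_\beta, \delta_\beta) \in \binom{\S\setminus\{\alpha,\beta\}}{2}$ satisfying $\gamma_\beta + \delta_\beta = \alpha+\beta$. I would first verify that this partner assignment is intrinsic to the triple: since $\alpha + \gamma_\beta = \beta + \delta_\beta$, the pair $(\beta,\delta_\beta)$ lies in $O_{\alpha+\gamma_\beta}$, and because that class also has size $2$ by assumption, the unordered triple $\{\beta,\gamma_\beta,\delta_\beta\}$ does not depend on which of its three elements one starts from. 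Hence $\S \setminus \{\alpha\}$ partitions into triples $\{\beta,\gamma,\delta\}$ with $\beta + \gamma + \delta = \alpha$ in $\f^n$, which already forces $k - 1 \equiv 0 \pmod 3$.

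Next I would show that $f$ must be plateaued with $|\wh{f}(\eta)| = 1/\sqrt{k}$ on $\S$. For each triple $\{\beta,\gamma,\delta\}$, applying Equation~\eqref{eqn: boolstruct} to the three directions $\alpha+\beta,\ \alpha+\gamma,\ \alpha+\delta$, each of which has exactly two pairs in its $O$-class, yields
\[
\wh{f}(\alpha)\wh{f}(\beta) = -\wh{f}(\gamma)\wh{f}(\delta), \quad \wh{f}(\alpha)\wh{f}(\gamma) = -\wh{f}(\beta)\wh{f}(\delta), \quad \wh{f}(\alpha)\wh{f}(\delta) = -\wh{f}(\beta)\wh{f}(\gamma).
\]
Multiplying these in pairs and cancelling the nonzero common factors gives $\wh{f}(\alpha)^2 = \wh{f}(\beta)^2 = \wh{f}(\gamma)^2 = \wh{f}(\delta)^2$. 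Since every $\beta \in \S\setminus\{\alpha\}$ lies in such a triple with $\alpha$, $f$ is plateaued, and Parseval~\eqref{eqn: parseval} pins the common magnitude to $|\wh{f}(\eta)| = 1/\sqrt{k}$. Observation~\ref{obs: gran} then forces $\sqrt{k}$ to be a power of $2$, so $k$ is an even power of $2$; in particular, $k > 4$ implies $k \geq 16$. Writing $\wh{f}(\eta) = \sigma(\eta)/\sqrt{k}$ with $\sigma : \S \to \{\pm 1\}$, the same system yields the sign identity $\sigma(\alpha)\sigma(\beta)\sigma(\gamma)\sigma(\delta) = -1$ for every triple.

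Finally I would derive a contradiction by evaluating at the origin. Since $f(0) \in \{\pm 1\}$, we have $\sum_{\eta \in \S} \sigma(\eta) = \pm\sqrt{k}$. The sign identity forces each triple's contribution $\sigma(\beta)+\sigma(\gamma)+\sigma(\delta)$ to lie in $\{\sigma(\alpha),\, -3\sigma(\alpha)\}$, so if $m = (k-1)/3$ is the number of triples and $p$ of them attain the extremal sum $-3\sigma(\alpha)$, a short rearrangement yields $4p = (k+2)/3 \mp \sqrt{k}$ for some choice of sign. For $k = 4^i$ with $i \geq 2$, one checks $4^i \equiv 4 \pmod{12}$, hence $(k+2)/3 \equiv 2 \pmod 4$; meanwhile $\sqrt{k} = 2^i \equiv 0 \pmod 4$, so the right-hand side is $\equiv 2 \pmod 4$ and cannot equal $4p$. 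This contradicts $k \geq 16$ and completes the proof. I expect the main conceptual step to be the second paragraph --- extracting the plateaued-plus-triple-partition structure (and hence the rigid granularity constraint forcing $k$ to be an even power of $2$) from the Titsworth identities within each triple --- since once that algebraic rigidity is in place the final modular-arithmetic knockout at $x = 0$ is essentially forced, and the residual case $k = 4$ (where $2^i = 2 \equiv 2 \pmod 4$ aligns with $(k+2)/3 = 2$) is correctly excluded by the hypothesis $k > 4$.
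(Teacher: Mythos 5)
Your proof is correct and follows essentially the same route as the paper's: partition $\S\setminus\{\alpha\}$ into triples via the $|O_{\alpha+\beta}|=2$ hypothesis, use the three Titsworth identities within each triple to show $f$ is plateaued, invoke granularity (Observation~\ref{obs: gran}) to force $k$ to be an even power of $2$, and then derive an arithmetic contradiction from evaluating $f$ at a point. The only difference is cosmetic: you phrase the final contradiction as a mod-$4$ obstruction on the count $p$ of all-negative triples, whereas the paper (Claims~\ref{claim: trivialfoldodd} and~\ref{claim: platfold}) packages the same content as $|\S_+|$ being simultaneously odd and even --- and indeed your own bookkeeping gives $|\S_+| = 1 + 2(m-p)$, manifestly odd, so the parity version was already available to you.
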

In order to rule out the possibility mentioned above, it suffices to exhibit a single pair $(\alpha, \beta) \in \binom{\S}{2}$ with $|O_{\alpha + \beta}| \geq 3$. Theorem~\ref{thm: 3foldintro} further shows that \emph{every} element $\alpha \in \S$ participates in such a pair.
\paragraph*{Proof idea} We prove this via a series of arguments. Define $\S_+:=\bra{\alpha \in \S ~\middle|~ \wh{f}(\alpha) > 0}$ and $\S_-:=\bra{\alpha \in \S ~\middle|~ \wh{f}(\alpha) < 0}$. We first show that if there exists $\alpha \in \S$ with $|O_{\alpha + \beta}| = 2$ for all $\beta \in \S \setminus \bra{\alpha}$, then both of the following hold.
\begin{enumerate}
    \item Either $|\S_+|$ or $|\S_-|$ is odd.
    \item The function $f$ must be plateaued.
\end{enumerate}
The proofs use Equation~\eqref{eqn: boolstruct}.
Next, we show that for plateaued Boolean functions, both $|\S_+|$ and $|\S_-|$ are even, yielding a contradiction in view of the first bullet above.  This proof involves a careful analysis of the Fourier coefficients and crucially uses Observation~\ref{obs: gran} and Equation~\eqref{eqn: parseval}.

A natural question raised by Theorem~\ref{thm: 3foldintro} is whether there exists a Boolean function $f$ and $\alpha \in \S$ such that there exists only one element $\beta \in \S \setminus \bra{\alpha}$ with $|O_{\alpha + \beta}| \geq 3$. The following theorem answers this question in the positive, and sheds more light on the structure of such functions. 

\begin{theorem}\label{thm: singledirectionintro}
~
\begin{enumerate}
\item \label{item: existsfunction} There exists a Boolean function $f : \f^n \to \pmone$ and $(\alpha, \beta) \in \binom{\S}{2}$ such that $|O_{\alpha + \gamma}| = 2$ for all $\gamma \in \S \setminus \bra{\alpha, \beta}$.
\item \label{item: allfunctions} Let $f : \f^n \to \pmone$ be any Boolean function. If there exists $(\alpha, \beta) \in \binom{S}{2}$ such that $|O_{\alpha + \gamma}| = 2$ for all $\gamma \in \S \setminus \bra{\alpha, \beta}$, then $|O_{\alpha + \beta}| = k/2$.
\end{enumerate}
\end{theorem}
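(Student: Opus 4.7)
Part 1's existence claim I would establish by exhibiting a concrete witness. The smallest example is the addressing function $\mathsf{ADD}_4 : \f^3 \to \pmone$: direct computation shows $|\S| = k = 4$, the set $\S$ is a $2$-dimensional affine subspace of $\f^3$, and for any $(\alpha, \beta) \in \binom{\S}{2}$ and any $\gamma \in \S \setminus \bra{\alpha, \beta}$ the equivalence class $O_{\alpha+\gamma}$ consists of exactly the two pairs $\bra{\alpha, \gamma}$ and $\bra{\beta, \alpha + \beta + \gamma}$, so the hypothesis of the theorem is satisfied for every such $(\alpha, \beta)$.

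For Part 2, I would first reduce the claim to a purely combinatorial statement. Let $\phi(x) := x + \alpha + \beta$, and partition $\S \setminus \bra{\alpha, \beta}$ into $T_1 := \bra{\gamma : \phi(\gamma) \in \S}$ and $T_2 := \bra{\gamma : \phi(\gamma) \notin \S}$. A direct enumeration of pairs $\bra{x, y} \in \binom{\S}{2}$ summing to $\alpha + \beta$ shows that these are exactly $\bra{\alpha, \beta}$ together with the pairs $\bra{\gamma, \phi(\gamma)}$ for $\gamma \in T_1$, so $|O_{\alpha+\beta}| = 1 + |T_1|/2 \leq k/2$ with equality if and only if $T_2 = \emptyset$; hence the whole task reduces to showing $T_2 = \emptyset$. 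I would then chain Titsworth (Eq.~\eqref{eqn: boolstruct}), Parseval (Eq.~\eqref{eqn: parseval}), and granularity (Observation~\ref{obs: gran}). Titsworth for direction $\alpha + \beta$ immediately forces $T_1 \neq \emptyset$ (otherwise it collapses to $\wh{f}(\alpha)\wh{f}(\beta) = 0$, contradicting $\alpha, \beta \in \S$). For $\gamma \in T_1$, Titsworth for $\alpha + \gamma$ yields $\wh{f}(\phi(\gamma)) = -(\wh{f}(\alpha)/\wh{f}(\beta))\,\wh{f}(\gamma)$, and iterating via $\phi^2 = \mathrm{id}$ gives $|\wh{f}(\alpha)| = |\wh{f}(\beta)|$. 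For $\gamma \in T_2$, the hypothesis forces the partner of $\bra{\alpha, \gamma}$ in $O_{\alpha+\gamma}$ to be some pair $\bra{x, y} \subseteq T_2$ with $\gamma + x + y = \alpha$; the three Titsworth equations for the triple $\bra{\gamma, x, y}$ then pin down $|\wh{f}(z)| = |\wh{f}(\alpha)|$ for $z \in \bra{\gamma, x, y}$. Feeding these constraints back into Parseval yields the key identity
\[
(|T_2| + 4)\,\wh{f}(\alpha)^2 = 1,
\]
after which granularity forces $|T_2| + 4$ to be a power of $4$, so either $|T_2| = 0$ (and we are done) or $|T_2| \geq 12$.

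Ruling out $|T_2| \geq 12$ is the main obstacle. I would invoke $f(0^n) = \sum_{\gamma \in \S} \wh{f}(\gamma) = \pm 1$ together with the Titsworth-forced sign structure: for each $T_2$-triple $\bra{a, b, c}$, the three values $\wh{f}(a), \wh{f}(b), \wh{f}(c)$ are each $\pm \wh{f}(\alpha)$ and satisfy $\wh{f}(a)\wh{f}(b)\wh{f}(c) = -\wh{f}(\alpha)^3$, so exactly one or exactly three of them equal $-\wh{f}(\alpha)$. In the sign case $\wh{f}(\alpha) = \wh{f}(\beta)$ the $T_1$-pair contributions to $\sum_\gamma \wh{f}(\gamma)$ cancel (since $\wh{f}(\phi(\gamma)) = -\wh{f}(\gamma)$), and the sum identity reduces to a Diophantine equation of the shape $(2 + m_1 - 3 m_3)^2 = 4 + 3(m_1 + m_3)$; a modulo-$4$ check rules out any solution with $m_1 + m_3 \geq 1$ subject to the granularity constraint that $4 + 3(m_1 + m_3)$ is a power of $4$. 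The hardest sub-case is $\wh{f}(\alpha) = -\wh{f}(\beta)$: here $\wh{f}(\phi(\gamma)) = \wh{f}(\gamma)$ for $\gamma \in T_1$, so the $T_1$-pair coefficients are equal rather than opposite in sign, and $\sum_{T_1} \wh{f}(\gamma)$ is not pinned down by the arguments above. I expect this case has to be closed either by invoking Titsworth for further directions of the form $\beta + a$ with $a \in T_2$ (each such direction has a non-trivial partner by Proposition~\ref{prop:tit}, producing extra constraints linking $T_1$ and $T_2$), or by applying the whole analysis to the Fourier shift $f \cdot \chi_{\alpha+\beta}$, whose spectrum exchanges $\wh{f}(\alpha)$ and $\wh{f}(\beta)$ and thereby reduces the harder sign case to the easier one.
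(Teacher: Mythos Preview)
Your Part~2 strategy matches the paper's: derive $|\wh f(\alpha)|=|\wh f(\beta)|$ and $|\wh f(\gamma)|=|\wh f(\alpha)|$ for $\gamma\in T_2$, combine Titsworth at $\alpha+\beta$ with Parseval to get $(|T_2|+4)\,\wh f(\alpha)^2=1$, force $|T_2|+4$ to be a power of $4$ by granularity, and finish with a parity count. Your Diophantine mod-$4$ argument in the case $\wh f(\alpha)=\wh f(\beta)$ is correct and is a clean repackaging of the paper's double count of $|\S_+|$. Two small points you skipped: first, the claim that for $\gamma\in T_2$ the partner pair $\{x,y\}$ lies in $T_2$ needs a line of justification (if, say, $x\in T_1$ then $(\beta,\phi(x))$ would be a third element of $O_{\alpha+x}$, contradicting the hypothesis); second, the Parseval identity requires Titsworth at $\alpha+\beta$ to control the $T_1$ contribution (it gives $\wh f(\alpha)\wh f(\beta)+\sum_i\wh f(\zeta_{i,1})\wh f(\zeta_{i,2})=0$, hence $\sum_{\gamma\in T_1}\wh f(\gamma)^2=2\,\wh f(\alpha)^2$), not merely to show $T_1\neq\emptyset$.

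The genuine gap is the sign case $\wh f(\alpha)=-\wh f(\beta)$. Your proposed fix, the multiplicative shift $g=f\cdot\chi_{\alpha+\beta}$, does not work: it swaps $\wh f(\alpha)$ and $\wh f(\beta)$ but leaves their \emph{relative} sign unchanged, and its Fourier support is $\S+(\alpha+\beta)$, which equals $\S$ precisely when $T_2=\emptyset$ --- the very thing you are trying to prove --- so for nonempty $T_2$ the hypothesis no longer even holds for $g$. The paper disposes of this case at the outset with a different kind of shift (Proposition~\ref{prop:wlog}): replace $f(x)$ by $\pm f(x+y)$ where $\chi_{\alpha+\beta}(y)=-1$. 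This multiplies each $\wh f(\gamma)$ by $\chi_\gamma(y)\in\pmone$, preserving $\S$ and hence all the classes $O_\cdot$ and the hypothesis, while flipping exactly one of $\wh f(\alpha),\wh f(\beta)$. With that reduction in place there is no second sign case, and your argument goes through. For Part~1 your witness $\mathsf{ADD}_4$ is correct and simpler than the paper's; the paper instead constructs an infinite family of examples from $\mathsf{ADD}_k$.
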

The proof of Part~\ref{item: allfunctions} of Theorem~\ref{thm: singledirectionintro} follows along the lines of the proof of Theorem~\ref{thm: 3foldintro}. The proof of Part~\ref{item: existsfunction} of Theorem~\ref{thm: singledirectionintro} constructs such a function by applying a simple modification to the addressing function.

We prove Theorems~\ref{thm: 3foldintro} and~\ref{thm: singledirectionintro} in Section~\ref{sec: nontrivial}.

\section{Proof of Theorem~\ref{thm:thm1}}
\label{sec:firstthm}

In this section we  prove our first result, Theorem~\ref{thm:thm1}.

\subsection{Warm up: sampling $\widetilde{O}(k^{3/4})$ parities.}
\label{sec:warmup}
In this section we prove a quantitatively weaker statement. This admits a simpler proof and introduces many key ideas that go into our proof of Theorem~\ref{thm:thm1}.
\begin{claim}
\label{clm:klemone}
Let $p:=\frac{2\sqrt{\log k}}{k^{1/4}}$, and let $\R$ be the set returned by {\sc SampleParity}($f,p$). Then
\[
\Pr[\B(f, \mbox{span }\R) \leq k/2] \geq 1-\frac{1}{k^{1/3}}.
\]
\end{claim}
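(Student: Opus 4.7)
The plan is to prove that for each fixed $\alpha \in \S$, the probability that $\alpha$ fails to be identified with any other element of $\S$ with respect to $\mathsf{span}\ \R$ is at most $k^{-4/3}$. A union bound over $\alpha \in \S$, combined with Observation~\ref{obs:identify} applied with $h = k$, then yields $\B(f, \mathsf{span}\ \R) \leq k/2$ except with probability at most $k^{-1/3}$.

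The structural starting point, extracted from Proposition~\ref{prop:tit}, is the following. For each $\beta \in \S \setminus \{\alpha\}$ there is a \emph{good triple} $T = \{\beta, \gamma, \delta\} \subseteq \S \setminus \{\alpha\}$ with $\beta + \gamma + \delta = \alpha$; whenever at least two members of $T$ lie in $\R$, their sum lies in $\mathsf{span}\ \R$ and equals $\alpha$ plus the third element of $T$, so $\alpha$ is identified with that third element. Since the sum condition forces each good triple to be determined by any two of its elements, two distinct good triples share at most one vertex. Letting $\tau$ denote the total number of good triples and $t_w$ the number containing a vertex $w$, we have $\tau \geq (k-1)/3$.

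The core of the argument is a dichotomy on $D := \max_w t_w$. If $D \leq \sqrt{k}/2$, a greedy procedure in the triple-hypergraph (pick any good triple; remove it together with every triple sharing a vertex with it; repeat) produces $M = \Omega(\sqrt{k})$ pairwise vertex-disjoint good triples, since each round removes at most $3D$ triples. The associated events ``$|T_i \cap \R| \geq 2$'' are mutually independent and each has probability $\Theta(p^2)$, so the probability that they all fail is at most $\exp(-\Omega(Mp^2)) = \exp(-\Omega(\log k))$, which for $p = 2\sqrt{\log k}/k^{1/4}$ is $\leq k^{-4/3}$. If instead $D > \sqrt{k}/2$, fix $w^*$ with $t_{w^*} > \sqrt{k}/2$; the good triples containing $w^*$ take the form $\{w^*, u_i, v_i\}$, and the ``other pairs'' $\{u_i, v_i\}$ are pairwise disjoint (again because distinct good triples share at most one vertex). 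Each event ``both $u_i, v_i \in \R$'' has probability $p^2$ and is independent across $i$; any one of them occurring gives $u_i + v_i = \alpha + w^* \in \mathsf{span}\ \R$, identifying $\alpha$ with $w^*$, crucially without requiring $w^* \in \R$. The failure probability in this case is at most $(1 - p^2)^{\sqrt{k}/2} \leq k^{-2}$.

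The main obstacle is the low-degree case: the greedy matching size $M = \Omega(\sqrt{k})$ is only just large enough to push $Mp^2$ above the $\Omega(\log k)$ threshold needed to beat $k^{-4/3}$. The choice $p = 2\sqrt{\log k}/k^{1/4}$ is calibrated precisely for this balance, which is also the reason the warm-up requires a larger sampling probability (and hence more queries) than Theorem~\ref{thm:thm1}.
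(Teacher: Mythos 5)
Your proof is correct and takes essentially the same route as the paper's: fix $\alpha$, use Proposition~\ref{prop:tit} to build a $3$-uniform hypergraph of triples summing to $\alpha$, split on whether the maximum vertex degree is above or below $\Theta(\sqrt k)$, and in the low-degree regime pack $\Omega(\sqrt k)$ disjoint triples greedily while in the high-degree regime use the $\Omega(\sqrt k)$ disjoint ``other pairs'' through a high-degree vertex (which, as you note, lets one identify $\alpha$ without the hub itself being sampled); finish by a union bound over $\alpha$ and Observation~\ref{obs:identify}. The only differences from the paper are cosmetic choices of the degree threshold and constant factors in the greedy packing and per-triple success probability, and you rightly flag that the low-degree case is exactly at the critical point where these constants must be tracked to clear $k^{-4/3}$.
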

By a Chernoff bound, with high probability, $|\R|=\widetilde{O}(k^{3/4})$.
\begin{proof}
Fix any $\alpha \in \S$. By Proposition~\ref{prop:tit} we have that for each $\beta \in \S \setminus\bra{\alpha}$, there exist $\beta_1, \beta_2 \in \S\setminus \bra{\alpha,\beta}$ such that $\alpha+\beta+\beta_1+\beta_2=0$. Define $Q_\beta:=\bra{\beta, \beta_1, \beta_2}$. Note that the sets $Q_\beta$ are not necessarily distinct. Define the multiset of unordered triples $\mathcal{F}:=\bra{Q_\beta ~\middle|~ \beta \in \S \setminus\bra{\alpha}}$. For each $\gamma \in \S\setminus\bra{\alpha}$, define $\mathcal{D}_\gamma:=\bra{\beta \in \S\setminus\bra{\alpha} ~\middle|~ \gamma \in Q_\beta}$. We now show that with high probability there exists $F \in \mathcal{F}$ such that $|F \cap \R| \geq 2$. We consider two cases below.

\begin{description}
\item[Case 1: There exists $\gamma \in \S\setminus\bra{\alpha}$ such that $|\mathcal{D}_\gamma| \geq k^{1/2}$.]\ \\
Consider the multiset of unordered pairs $\mathcal{A}:=\bra{Q_\beta\setminus\bra{\gamma} ~\middle|~ \beta \in \mathcal{D}_\gamma}$. Each pair in $\mathcal{A}$ can repeat at most thrice. Hence there are at least $k^{1/2}/3$ distinct pairs in $\mathcal{A}$. Moreover the distinct pairs in $\mathcal{A}$ are disjoint. This can be inferred from the observation that the sum of the two elements in each pair in $\mathcal{A}$ equals $\alpha+\gamma$. Thus 
\[
\Pr\left[\forall A \in \mathcal{A}, A \nsubseteq \R\right] \leq (1-p^2)^{k^{1/2}/3} = \left(1-\frac{4 \log k}{k^{1/2}}\right)^{k^{1/2}/3} \leq \frac{1}{k^{4/3}}.
\]
\item[Case 2: For each $\gamma \in \S \setminus \bra{\alpha}$, $|\mathcal{D}_\gamma| < k^{1/2}$.]\ \\
In this case each triple in $\mathcal{F}$ has non-empty intersection with at most $3k^{1/2}$ sets in $\mathcal{F}$. Thus one can greedily obtain a collection $\mathcal{T}$ of at least $\frac{k-1}{3k^{1/2}}$ disjoint triples in $\mathcal{F}$. 
\[
\Pr\left[\forall T \in \mathcal{T}, |T \cap \R| < 2 \right] \leq (1-p^2)^{\frac{k-1}{3k^{1/2}}}=\left(1-\frac{4 \log k}{k^{1/2}}\right)^{\frac{k-1}{3k^{1/2}}},
\]
which is at most $\frac{1}{k^{4/3}}$ for large enough $k$.
\end{description}
From the above two cases it follows that with probability at least $1-\frac{1}{k^{4/3}}$, there exists a triple $F \in \mathcal{F}$ such that $|F \cap \R| \geq 2$. Assume existence of such a triple $F$, and let $\delta_1, \delta_2 \in F \cap \R$. Let $\delta:=F \setminus \{\delta_1, \delta_2\}$. Since $\alpha+\delta_1+\delta_2+\delta=0^n$, we have that $\alpha +\delta=\delta_1+\delta_2 \in \mathsf{span}~\R$, i.e., $\alpha$ is identified with $\delta$ with respect to $\mathsf{span}~\R$.  By a union bound over all $\alpha \in \S$ it follows that with probability at least $1-\frac{1}{k^{1/3}}$, for every $\alpha \in \S$ there exists a $\delta \in \S \setminus\bra{\alpha}$ such that $\alpha$ is identified with $\delta$ w.r.t.~$\R$. The claim follows by Observation~\ref{obs:identify}.
\end{proof}

\subsection{Sampling $O(k^{1/2})$ parities}
\label{sec:thm1}
We now proceed to prove Theorem~\ref{thm:thm1} by refining the ideas developed in Section~\ref{sec:warmup}. Recall that by a Chernoff bound, $|\R| = O(\sqrt{k})$ with high probability (where $\R$ is as in Theorem~\ref{thm:thm1}).
We require the following inequality.
\begin{proposition}\label{prop: calculus}
For any non-negative integer $d$, and $p \in [0, 1]$ be such that $pd \leq 1$. Then,
\[(1-p)^d \leq 1-\frac{1}{2}pd.\]
\end{proposition}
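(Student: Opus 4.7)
The plan is to prove this by induction on $d$, with $p \in [0,1]$ held fixed. The base case $d=0$ is immediate since both sides equal $1$, and the case $d=1$ reduces to $1 - p \leq 1 - p/2$, which holds for $p \geq 0$.

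For the inductive step, suppose the inequality holds at $d$ and assume $p(d+1) \leq 1$. This in particular implies $pd \leq 1$, so the inductive hypothesis applies. I would then write
\[
(1-p)^{d+1} = (1-p) \cdot (1-p)^d \leq (1-p)\left(1 - \tfrac{pd}{2}\right) = 1 - \tfrac{pd}{2} - p + \tfrac{p^2 d}{2},
\]
and compare this to the target $1 - p(d+1)/2 = 1 - pd/2 - p/2$. After cancellation the needed inequality reduces to $p^2 d / 2 \leq p/2$, i.e., $pd \leq 1$, which is precisely the hypothesis. So the induction closes.

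The argument is essentially a one-line calculation, so there is no real obstacle to overcome. An alternative route, should a non-inductive proof be preferred, is purely analytic: chain the standard bound $(1-p)^d \leq e^{-pd}$ with the inequality $e^{-x} \leq 1 - x/2$ for $x \in [0,1]$. The latter can be verified by elementary calculus, observing that the function $g(x) := 1 - x/2 - e^{-x}$ satisfies $g(0) = 0$, $g(1) > 0$, and has a unique critical point in $(0,1)$, so $g$ is nonnegative on $[0,1]$. Applying this with $x = pd$ gives the bound. Either route delivers the proposition in a few lines.
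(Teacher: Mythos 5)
Your primary argument is correct and is essentially identical to the paper's: both proceed by induction on $d$, multiply the inductive bound by $(1-p)$, and close the step using $p^2 d \leq p$, which follows from $pd \leq 1$. The alternative analytic route you sketch, chaining $(1-p)^d \leq e^{-pd}$ with $e^{-x} \leq 1 - x/2$ on $[0,1]$, is also valid and a bit slicker, but it is not the route the paper takes.
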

\begin{proof}
The proof proceeds via induction on $d$.
\begin{description}
\item[Base case: d=0.] The statement can be easily verified to be true; each side evaluates to 1.
\item[Inductive step: ] Assume that the statement is true for $d\geq 0$ and all $p \in [0, 1]$ such that $pd \leq 1$. We now show that the hypothesis holds for $d+1$ and all $p \in \left[0, \frac{1}{d + 1}\right]$. We have 
\begin{align*}
(1-p)^{d+1}=&(1-p)\cdot(1-p)^d \\
\leq&(1-p)\left(1-\frac{1}{2}pd\right)\tag*{by inductive hypothesis, since $pd \leq p(d+1) \leq 1$} \\
=&1-\left(\frac{1}{2}p+\frac{1}{2}pd\right)-\frac{1}{2}p +\frac{1}{2}p^2d \\
\leq&1-\frac{1}{2}p(d+1).\tag*{since $pd \leq 1$}
\end{align*}
\end{description}
\end{proof}
\begin{proof}[Proof of Theorem~\ref{thm:thm1}] For technical reasons we instead consider a two-step probabilistic procedure. Define $p':=\frac{1}{4k^{1/2}}$. Let $\R_1$ and $\R_2$ be the sets returned by two independent runs of {\sc SampleParity}($f, p'$), and let $\R' := \R_1 \cup \R_2$. Each $\alpha \in \S$ is independently included in $\R'$ with probability equal to $1 - (1 - p')^2  < 2p' = p$.
Hence it suffices to prove that there exists a constant $c \in (0,1]$ such that $\E[\B(f, \mathsf{span}~\R')] \leq ck$.

Fix any $\alpha \in \S$ and let $Q_\beta$ and $\mathcal{F}$ be as in the proof of Claim~\ref{clm:klemone}. For $\gamma \in \S\setminus\bra{\alpha}$, define $\wdeg(\gamma):=|\bra{\beta \in \S\setminus\bra{\alpha}~\middle|~\gamma \in Q_{\beta}\setminus \bra{\beta}}|$. Clearly, $\E_{\gamma \sim \S\setminus\bra{\alpha}}[\wdeg(\gamma)]=2$.
Define $A:=\bra{\gamma \in \S \setminus \bra{\alpha}~\middle|~\wdeg(\gamma)\geq 4k^{1/2}}$. By Markov's inequality, $|A| \leq k^{1/2}/2$. Fix an ordering $\sigma$ on $\S\setminus\bra{\alpha}$ such that all elements of $\overline{A}:=(\S \setminus \bra{\alpha}) \setminus A$  appear before all elements of $A$.

Define $T:=\bra{\beta \in \S\setminus\bra{\alpha} ~\middle|~ Q_\beta \setminus \bra{\beta} \subseteq A}$. Observe that the pairs $Q_\beta \setminus\bra{\beta}$ for distinct $\beta \in \S\setminus\bra{\alpha}$ are distinct. This can be inferred from the observation that the sum (with respect to coordinate-wise addition in $\f$) of the two elements of $Q_\beta \setminus\bra{\beta}$ equals $\alpha+\beta$. This gives us the following bound on the size of $T$:
\begin{equation}\label{eqn: tupperbound}
|T| \leq \binom{|A|}{2} \leq \frac{k}{8}.
\end{equation}
Define $\overline{T}:=(\S\setminus\bra{\alpha})\setminus T$. For each $\beta \in \overline{T}$, the first character (according to $\sigma$) in the pair $Q_\beta \setminus \bra{\beta}$ is from $\overline{A}$. For each $\gamma \in \overline{A}$, define $\d(\gamma)$ to be the number of $\beta \in \overline{T}$ such that $\gamma$ is the first element in $Q_\beta \setminus \bra{\beta}$. By Equation~\eqref{eqn: tupperbound} we have
\begin{align}\sum_{\gamma \in \overline{A}} \d(\gamma)=|\overline{T}| \geq k-1- \frac{k}{8}\geq \frac{2k}{3}\label{degreebound}\end{align}
where the last inequality holds for large enough $k$.

For $\gamma \in \overline{A}$, let $\mathcal E(\gamma)$ be the event that there exists $\beta \in \overline{T} \cap \R_1$ such that $\gamma$ is the first element in $Q_\beta \setminus \bra{\beta}$. We have
\begin{equation}\label{prob-bound}
\Pr_{\R_1}[\mathcal E(\gamma)] = 1-(1-p')^{\d(\gamma)} \geq \frac{p'\cdot\d(\gamma)}{2},
\end{equation}
where the last inequality follows by Proposition~\ref{prop: calculus}.
Here Proposition~\ref{prop: calculus} is applicable since $\d(\gamma) \leq \wdeg(\gamma) \leq 4k^{1/2}$ (since $\gamma \in \overline{A}$), and $p' = \frac{1}{4k^{1/2}}$.
Define the random set $B:=\bra{\gamma \in \overline{A} ~\middle|~ \mathcal{E}(\gamma)\mbox{ occurs}}$. We have
\begin{align*}
\E_{\R_1}[|B|] & = \sum_{\gamma \in \overline{A}}\Pr_{\R_1}[\mathcal{E}(\gamma)] \geq \sum_{\gamma \in \overline{A}} \frac{p' \cdot \d(\gamma)}{2} \tag*{by linearity of expectation and Equation~\eqref{prob-bound}}\\
& \geq \frac{1}{2} \cdot \frac{1}{4k^{1/2}} \cdot \frac{2k}{3} \geq \frac{k^{1/2}}{12}. \tag*{by Equation~\eqref{degreebound}, and substituting the value of $p'$}
\end{align*}
Furthermore, the events $\mathcal E(\gamma)$ are independent. By a Chernoff bound, 
$\Pr_{\R_1}\left[|B|\geq \frac{k^{1/2}}{24}\right] \geq 0.9$. 
Now,
\begin{align*}
\Pr_{\R_1, \R_2} \left[B \cap \R_2\neq\emptyset ~\middle|~ |B| \geq \frac{k^{1/2}}{24}\right] & \geq 1-(1-p')^{k^{1/2}/24} \geq 1-e^{-p'\cdot \frac{k^{1/2}}{24}}=1-e^{-\frac{1}{96}}\\
& = c_1,~\text{say}.
\end{align*}
Thus, the probability of the event $\mathcal{E}:=\bra{|B|\geq \frac{k^{1/2}}{24}} \wedge \bra{B \cap \R_2\neq\emptyset}$ is at least $0.9 c_1$. Suppose the event $\mathcal{E}$ occurs, and let $\gamma \in B \cap \R_2$. By the definitions of $B$ and $\mathcal{E}(\gamma)$, there exists $\beta \in \overline{T}\cap\R_1$ such that $\gamma$ is the first element of $Q_{\beta}\setminus\bra{\beta}$. Let $\delta:=Q_{\beta}\setminus\bra{\beta, \gamma}$. Then, $\alpha+\delta=\beta+\gamma$. Since $\beta \in \R_1$ and $\gamma \in \R_2$, $\alpha$ is identified with $\delta$ with respect to $\mathsf{span}~\R'$. 
In summary, we have shown that for any $\alpha \in \S$,
\[
\Pr_{\R_1, \R_2}[\alpha~\text{is identified with some}~\delta \in \S \setminus \bra{\alpha}~\text{w.r.t.}~\mathsf{span}~\R'] \geq 0.9 c_1.
\]
By linearity of expectation,
\[
\E_{\R_1, \R_2}[\abs{\bra{\alpha \in \S~\middle|~\alpha~\text{is identified with some}~\delta \in \S \setminus \bra{\alpha}~\text{w.r.t.}~\mathsf{span}~\R'}}] \geq k\cdot 0.9c_1.
\]
Observation~\ref{obs:identify} then implies
\[
\E_{\R_1, \R_2}[\B(f, \mathsf{span}~\R')] \leq k - \frac{k\cdot 0.9 c_1}{2} = ck,
\]
where $c = \left(1 - \frac{0.9 c_1}{2}\right)$.
\end{proof}

\section{Proof of Theorem~\ref{thm: folding}}\label{sec: improved}

In this section we prove Theorem~\ref{thm: folding}. This theorem gives us a tool to construct efficient PDTs if all Boolean functions have ``good folding properties''.

\begin{proof}[Proof of Theorem~\ref{thm: folding}]
Since $f$ is $(\delta, \ell)$-folding, Claim~\ref{claim:folding} implies that many $\alpha \in \S$ participate in many large folding directions. More precisely, if we define
\[
U:=\bra{\alpha \in \S ~\middle|~ \mbox{there exist at least $\delta k/2$ many $\beta \in \S \setminus \bra{\alpha}$ with $|O_{\alpha+\beta}|\geq k^\ell+1$}},
\]
then $|U| \geq \delta k /3$.

As in the proof of Theorem~\ref{thm:thm1}, for technical reasons we consider a slightly different probabilistic procedure than {\sc SampleParity}($f, p$). Define $p_1:=\frac{4\log k}{5e\delta k^{(1+\ell)/2}}$ and $p_2:=\frac{2000 \log k}{\delta k^{(1+\ell)/2}}$. Let $\R_1$ and $\R_2$ be the sets returned by independent runs of {\sc SampleParity}($f, p_1$) and {\sc SampleParity}($f, p_2$), respectively, and let $\R' := \R_1 \cup \R_2$. Each $\alpha \in \S$ is independently included in $\R'$ with probability equal to $1 - (1 - p_1)(1 - p_2) = p_1 + p_2 - p_1p_2 < 2p_2 = p$.
Hence it suffices to prove that with probability at least $1 - \frac{1}{k}$, we have $\B(f, \mathsf{span}~\R') \leq k - \frac{\delta k}{6}$.

We now show that with probability at least $1 - \frac{1}{k}$, each element of $U$ is identified with some other element of $\S$ with respect to $\mathsf{span}\ \R'$. The theorem would then follow by Observation~\ref{obs:identify}.
To this end, fix an $\alpha \in U$. Define $T$ to be a set of arbitrarily chosen $\delta k/2$ elements $\beta \in \S \setminus \bra{\alpha}$ such that $|O_{\alpha+\beta}|\geq k^\ell+1$. 
For each $\beta \in T$, there are at least $k^\ell$ distinct pairs $(\beta_1,\beta_2)\in O_{\alpha + \beta} \setminus \bra{(\alpha,\beta)}$. Let the set of any $k^\ell$ such pairs be $P_{\beta}$. Define $P:=\bigcup_{\beta \in T} P_{\beta}$. Since the $P_{\beta}$'s are pairwise disjoint,
\begin{equation}\label{eqn: psize}
|P|=\delta k^{1+\ell}/2.
\end{equation}
For each $\gamma \in \S$, define $\wdeg(\gamma)$ to be the total number of pairs in $P$ that $\gamma$ appears in. By a counting argument, Equation~\eqref{eqn: psize} implies
\[
\E_{\gamma \in \S}[\wdeg(\gamma)] = \delta k^\ell.
\]
Define $A:=\bra{\gamma \in \S ~\middle|~ \wdeg(\gamma) > 5 k^{(1+\ell)/2}/2}$. 
By Markov's inequality, 
\begin{equation}\label{eqn: aupperbd}
|A| < \delta k ^\ell \frac{2}{5 k^{(1+\ell)/2}}\cdot k= \frac{2\delta k^{(1+\ell)/2}}{5}.
\end{equation}
Define $\overline{A}:=(\S \setminus\bra{\alpha})\setminus A$.
Define 
\[
P' := \bra{(\mu, \nu) \in P ~\middle|~ \mu \in \overline{A}~\text{or}~\nu \in \overline{A}}.
\]
By Equations~\eqref{eqn: psize} and~\eqref{eqn: aupperbd}, 
\begin{equation}\label{eqn: p'lb}
|P'| \geq \frac{\delta k^{1+\ell}}{2}-{|A| \choose  2}\geq \frac{\delta k^{1+\ell}}{2}-\frac{4\delta^2 k^{1+\ell}}{50} = \delta k^{1 + \ell}\left(\frac{1}{2} - \frac{2\delta}{25}\right)> \frac{\delta k^{1+\ell}}{4}.
\end{equation}
Define $T':=\bra{\beta \in T ~\middle|~ |P_{\beta} \cap P'| \geq k^\ell/8}$. Next we show that $T'$ has $\Omega(\delta k)$ elements. By the definition of $T'$, we have
\[
|P'| \leq |T'|\cdot k^\ell +\left(\frac{\delta k}{2}-|T'|\right)\cdot\frac{k^\ell}{8}.
\]
Along with Equation~\eqref{eqn: p'lb}, we obtain
\begin{align}
&|T'|\cdot k^\ell +\left(\frac{\delta k}{2}-|T'|\right)\cdot\frac{k^\ell}{8}\geq \frac{\delta k^{1+\ell}}{4}\nonumber\\
\implies &|T'|\cdot\frac{7k^\ell}{8} \geq \frac{3\delta k^{1+\ell}}{16}\nonumber\\
\implies &|T'|\geq \frac{3\delta k}{14}. \label{eqn: t'lb}
\end{align}
For $\gamma \in \overline{A}$ and $\beta \in T'$ we say that \emph{$\beta$ hits $\gamma$} if $\gamma$ appears in a pair in $P_{\beta}$.
For any $\gamma \in \overline{A}$, define 
\[
\d(\gamma) := |\bra{\beta \in T' ~\middle|~ \beta~\text{hits}~\gamma}|.
\]
Observe that $\gamma$ can appear in at most one pair in any $P_\beta$. For a fixed $\gamma \in \overline{A}$, let $\mathcal{E}(\gamma)$ denote the following event.
\[
\mathcal{E}(\gamma) := \bra{\lvert\bra{\beta \in T' ~\middle|~ \beta \in \R_1 \textnormal{ and } \beta \textnormal{ hits } \gamma}\rvert \geq \frac{4 \log k}{\delta}}.
\]
We have $\d(\gamma) \leq \wdeg(\gamma) \leq 5k^{(1 + \ell)/2}/2$, where the last inequality follows from the definition of $\overline{A}$.
Thus, for any fixed $\gamma \in \overline{A}$ we have
\begin{align*}
\Pr_{\R_1}[\mathcal{E}(\gamma)] & \leq {\binom{\d(\gamma)} {(4 \log k)/\delta}}p_1^{(4\log k)/\delta} \leq \left(\frac{e\delta\cdot\d(\gamma) \cdot p_1}{4\log k}\right)^{(4\log k)/\delta}\\
& \leq \left(\frac{1}{2}\right)^{(4\log k)/\delta} \tag*{since $p_1 = \frac{4\log k}{5e\delta k^{(1+\ell)/2}}$}\\
&\leq \frac{1}{3k^3}.\tag*{for large enough $k$}
\end{align*}

Since $|\overline{A}| \leq k$, we have by a union bound that 
\begin{align}\label{eqn: egammaub}
\Pr_{\R_1}\left[\bigcup_{\gamma \in \overline{A}}\mathcal{E}(\gamma)\right] \leq \frac{1}{3k^2}.
\end{align}
Recall from Equation~\eqref{eqn: t'lb} that $|T'| \geq 3\delta k/14$. By our choice of $p_1$, the expected number of elements in $T'$ that are included in $\R_1$ is at least $\frac{3\delta k}{14} \cdot \frac{4\log k}{5e\delta k^{(1+\ell)/2}} = \frac{6}{35e} \cdot k^{(1 - \ell)/2}\log k$. By a Chernoff bound, the number of elements $\beta$ in $T'$ that are included in $\R_1$ is at least $\frac{1}{25}\cdot k^{(1-\ell)/2}\log k$ with probability at least $1 - \exp(-\Omega(k^{(1-\ell)/2}\log k))$. Since $\ell \leq 1-\Omega(1)$, this probability is at least $1 - \exp(-(k^{\Omega(1)}))$, which is at least $1-\frac{1}{3k^2}$ for large enough $k$. Define $X := \R_1 \cap T'$ and $\mathcal{F}$ to be the event $\bra{\overline{\bigcup_{\gamma \in \overline{A}}\mathcal{E}(\gamma)}} \wedge \bra{\abs{X} > \frac{1}{25}\cdot k^{(1-\ell)/2}\log k}$. By Equation~\eqref{eqn: egammaub} and a union bound,
\[
\Pr_{\R_1}[\mathcal{F}] > 1 - \frac{2}{3k^2}.
\]
Define $Y:= \bra{\gamma \in \overline{A} ~\middle|~ \gamma~\text{is hit by some}~\beta \in X}$. Now condition on $\mathcal{F}$.
By the definitions of $T'$ and $\mathcal{E}(\gamma)$, and from the fact that $|X|\geq\frac{1}{25}\cdot k^{(1-\ell)/2}\log k$ under the above conditioning, it follows that
\[
\abs{Y} \geq \frac{|X|\cdot(k^\ell/8)}{(4\log k)/\delta} \geq \frac{\delta k^{(1+\ell)/2}}{800}.
\]
Next we proceed to the second phase of sampling. We have
\[
\Pr_{\R_1, \R_2}\left[\R_2 \cap Y \neq \emptyset~\middle|~\mathcal{F} \right] \geq 1-(1-p_2)^{\frac{\delta k^{(1+\ell)/2}}{800}} \geq 1 - e^{-p_2\cdot\frac{\delta k^{(1+\ell)/2}}{800}} \geq 1-\frac{1}{k^3}.
\]
Condition on the event that $\R_2 \cap Y \neq \emptyset$ and let $\gamma \in \R_2 \cap Y$. By the definition of $Y$, we have that $\gamma$ hits some $\beta \in \R_1 \cap T' \subseteq \R_1$, i.e., $\gamma$ appears in a pair in $P_\beta$, say $(\gamma, \mu)$. Thus, $\alpha + \beta = \gamma + \mu$, which implies $\alpha + \mu = \beta + \gamma$. Since $\beta \in \R_1$ and $\gamma \in \R_2$, we have $\alpha + \mu  \in \mathsf{span}~\R'$.
We have thus shown that conditioned on the events $\mathcal{F}$ and $\R_2 \cap Y \neq \emptyset$, $\alpha$ is identified with $\mu$ with respect to $\mathsf{span}~\R'$ with probability $1$. By a union bound, $\Pr_{\R_1, \R_2}[\mathcal{F} \cap \bra{\R_2 \cap Y \neq \emptyset}] > 1-\frac{1}{k^2}$ for large enough $k$. The theorem follows by a union bound over all $\alpha \in U$.
\end{proof}

\section{Proofs of Theorem~\ref{thm: 3foldintro} and Theorem~\ref{thm: singledirectionintro}}\label{sec: nontrivial}

In this section we prove Theorem~\ref{thm: 3foldintro} and Theorem~\ref{thm: singledirectionintro}. Theorem~\ref{thm: 3foldintro} states that for any Boolean function $f : \f^n \to \pmone$ and $\alpha \in \S$, there exists at least one $\beta \in \S$ with $|O_{\alpha + \beta}| \geq 3$. Theorem~\ref{thm: singledirectionintro} consists of two parts; the first part asserts existence of a function $f : \f^n \to \pmone$ and $\alpha \in \S$ for which there exists only one $\beta \in \S$ with $|O_{\alpha + \beta}| \geq 3$, and the second part gives additional structure on such functions. 

We first recall and introduce some notation. 
Recall from Proposition~\ref{prop:tit} that for any Boolean function $f : \f^n \to \pmone$ and every $\gamma \in (\S + \S) \setminus \bra{0^n}$, we have $|O_{\gamma}| \geq 2$. For any $\gamma$ with $|O_{\gamma}| > 2$, we say that $\gamma$ is a \emph{non-trivial folding direction}. Hence, Theorem~\ref{thm: 3foldintro} can be rephrased to say that for any Boolean function $f : \f^n \to \pmone$, every element $\alpha \in \S$ must participate in at least one non-trivial folding direction.
For any Boolean function $f : \f^n \to \pmone$, define $\S_+ := \bra{\alpha \in \S ~\middle|~ \wh{f}(\alpha) > 0}$, and $\S_- := \bra{\alpha \in \S ~\middle|~ \wh{f}(\alpha) < 0}$. For any set $S$, we use the notation $\binom{S}{3}$ to denote the set of all subsets of $S$ of size exactly $3$. We abuse notation and denote a generic element of $\binom{S}{3}$ as $(a, b, c)$ rather than $\bra{a, b, c}$.

We require the following proposition.
\begin{proposition}
\label{prop:wlog}
Let $f:\f^n \to \pmone$ be a Boolean function with Fourier support $\S$ with $k = |\S| \geq 2$. Let $\alpha, \beta$ be two distinct parities in $\S$. Then, there exists a Boolean function $g: \f^n \to \pmone$ with Fourier support $\S$ and $\wh{g}(\alpha)>0,\wh{g}(\beta)>0$.
\end{proposition}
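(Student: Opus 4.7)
The plan is to realize $g$ in the form $g(x) := \epsilon \cdot f(x + s)$ for a suitable choice of sign $\epsilon \in \pmone$ and shift $s \in \f^n$. Both translation of the argument and global negation trivially preserve being Boolean-valued, and they preserve Fourier support as well: from $\chi_\mu(x+s) = \chi_\mu(x)\chi_\mu(s)$ one computes $\wh{g}(\mu) = \epsilon\,(-1)^{\mu \cdot s}\,\wh{f}(\mu)$ for every $\mu \in \f^n$. Hence $\supp(\wh{g}) = \S$ regardless of how $\epsilon$ and $s$ are chosen, and the only remaining issue is to control the signs at $\alpha$ and $\beta$.

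Demanding $\wh{g}(\alpha) > 0$ and $\wh{g}(\beta) > 0$ amounts to the two conditions
\[
\epsilon\,(-1)^{\alpha \cdot s}\,\mathrm{sgn}(\wh{f}(\alpha)) = 1, \qquad \epsilon\,(-1)^{\beta \cdot s}\,\mathrm{sgn}(\wh{f}(\beta)) = 1,
\]
which are well-defined because $\alpha, \beta \in \S$ so the two signs lie in $\pmone$. Multiplying the two identities eliminates $\epsilon$ and gives the single $\f$-linear equation $(\alpha+\beta)\cdot s = b$, where $b \in \f$ is $0$ if $\wh{f}(\alpha)$ and $\wh{f}(\beta)$ have the same sign and $1$ otherwise. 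Since $\alpha \neq \beta$, the vector $\alpha+\beta$ is nonzero in $\f^n$, so such an $s$ exists. Once $s$ is fixed, the first of the two sign constraints determines $\epsilon \in \pmone$ uniquely, and the second constraint is then automatic by the product identity.

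Every step is elementary (the Fourier transform under translation, plus the solvability of one linear equation over $\f$), so there is no real obstacle. The content of the proposition is simply that translation together with global negation provides enough freedom to independently prescribe the signs of the Fourier coefficients at any two distinct frequencies while keeping the support unchanged.
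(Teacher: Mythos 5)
Your proposal is correct and takes essentially the same approach as the paper: both use the ansatz $g(x) = \pm f(x+s)$, exploit $\wh{g}(\mu) = \pm\chi_\mu(s)\wh{f}(\mu)$ to see the support is preserved, and use the nonvanishing of $\alpha+\beta$ to flip the relative sign when needed. The paper phrases it as a two-case argument (same-sign vs.\ opposite-sign, shifting only in the second case), while you fold both cases into solving one linear equation $(\alpha+\beta)\cdot s = b$ over $\f$; this is a minor presentational difference, not a different argument.
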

\begin{proof}
If $\wh{f}(\alpha)\wh{f}(\beta) >0$, then the proposition follows by setting $g = f$ if $\wh{f}(\alpha)>0$, and $g = -f$ otherwise. Hence we may assume that $\wh{f}(\alpha)\wh{f}(\beta)<0$. Fix any $y\in\f^n$ such that $\chi_{\alpha+\beta}(y)=-1$, i.e., $\chi_\alpha(y)=-\chi_\beta(y)$. Define $h(x):=f(x+y)$. Then,
\[h(x)=\sum_{\delta \in \f^n}\wh{f}(\delta)\chi_\delta(x + y) = \sum_{\delta \in \f^n}\wh{f}(\delta)\chi_\delta(y)\cdot\chi_\delta(x),\]
giving us
$\wh{h}(\delta)=\wh{f}(\delta)\cdot\chi_{\delta}(y)$ for all $\delta \in \f^n$. In particular,
$\wh{h}(\alpha)=\wh{f}(\alpha)\cdot\chi_{\alpha}(y)$ and $\wh{h}(\beta)=\wh{f}(\beta)\cdot\chi_{\beta}(y)$. By the choice of $y$ and the assumption $\wh{f}(\alpha)\wh{f}(\beta)<0$ we have that $\wh{h}(\alpha)\wh{h}(\beta) = \wh{f}(\alpha)\wh{f}(\beta) \chi_{\alpha + \beta}(y)>0$. Then as described before, $g$ can be taken to be $h$ or $-h$ depending on the sign of $\wh{h}(\alpha)$. Finally, note that for each $\delta \in \f^n$, $|\wh{g}(\delta)|=|\wh{f}(\delta)|$, implying that the Fourier support of $g$ is $\S$.
\end{proof}

We next state a preliminary claim.

\begin{claim}\label{claim: trivialfoldodd}
Let $f : \f^n \to \pmone$ be any Boolean function. Suppose there exists $\alpha \in \S$ such that $|O_{\alpha + \beta}| = 2$ for all $\beta \in \S \setminus \bra{\alpha}$. Then, either $|\S_+|$ is odd or $|\S_-|$ is odd.
\end{claim}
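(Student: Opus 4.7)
The plan is to apply Equation~\eqref{eqn: boolstruct} at each folding direction $\alpha + \beta$ to extract a sign constraint on Fourier coefficients, and then to package these constraints by partitioning $\S \setminus \bra{\alpha}$ into triples whose sign products are each forced to equal $-1$. Since negating $f$ swaps $\S_+$ and $\S_-$ while leaving the conclusion invariant, I may first assume $\wh{f}(\alpha) > 0$.

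For each $\beta \in \S \setminus \bra{\alpha}$, the hypothesis $|O_{\alpha + \beta}| = 2$ yields a unique second pair $(\gamma_\beta, \delta_\beta) \in \binom{\S}{2}$ with $\gamma_\beta + \delta_\beta = \alpha + \beta$. Evaluating Equation~\eqref{eqn: boolstruct} at the nonzero vector $\alpha + \beta$ and grouping the ordered pairs into their two unordered representatives gives
\[
\wh{f}(\alpha)\wh{f}(\beta) + \wh{f}(\gamma_\beta)\wh{f}(\delta_\beta) = 0.
\]
Since $\wh{f}(\alpha) > 0$, this forces the product $\wh{f}(\beta)\wh{f}(\gamma_\beta)\wh{f}(\delta_\beta)$ to be negative, so the triple $T_\beta := \bra{\beta, \gamma_\beta, \delta_\beta}$ contains an odd number of elements of $\S_-$.

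Next I would verify that the triples $\bra{T_\beta ~\middle|~ \beta \in \S \setminus \bra{\alpha}}$ partition $\S \setminus \bra{\alpha}$. Since $(\gamma_\beta, \delta_\beta) \neq (\alpha, \beta)$ but $\gamma_\beta + \delta_\beta = \alpha + \beta$, the four elements $\alpha, \beta, \gamma_\beta, \delta_\beta$ must all be distinct (any coincidence between the two pairs forces them to be equal as sets); in particular $|T_\beta| = 3$ and $T_\beta \subseteq \S \setminus \bra{\alpha}$. If $\beta' \in T_\beta \setminus \bra{\beta}$, say $\beta' = \gamma_\beta$, then $(\beta, \delta_\beta)$ and $(\alpha, \beta')$ are two distinct pairs in $O_{\alpha + \beta'}$, so by uniqueness $T_{\beta'} = T_\beta$. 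Combining the two ingredients, each triple contributes either $1$ or $3$ elements to $\S_-$ and hence an even number ($2$ or $0$) to $\S_+$; adding the contribution of $\alpha$ itself yields $|\S_+| = 1 + (\text{even})$, which is odd. The one delicate step is the partition argument; once the triples are in place, Equation~\eqref{eqn: boolstruct} does the sign bookkeeping essentially for free.
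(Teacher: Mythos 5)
Your proof is correct and follows essentially the same route as the paper: both invoke Equation~\eqref{eqn: boolstruct} at each direction $\alpha+\beta$ to force a sign constraint, observe that the resulting triples partition $\S \setminus \bra{\alpha}$, and count $|\S_+|$ modulo $2$. Your packaging via the odd-sign-product observation (each triple contributes an even number to $\S_+$) is a slightly more uniform way of stating the paper's argument that each triple in $T_+$ contains exactly two elements of $\S_+$, but the underlying idea is identical.
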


\begin{proof}
Fix any set $\alpha \in \S$ such that $|O_{\alpha + \beta}| = 2$ for all $\beta \in \S \setminus \bra{\alpha}$. Assume $\alpha \in \S_+$ (else run this argument with $\S_+$ and $\S_-$ interchanged).
Consider the set of unordered triples
\[
T = \bra{(\beta, \gamma, \delta) \in \binom{\S \setminus \bra{\alpha}}{3} ~\middle|~ \alpha + \beta + \gamma + \delta = \emptyset}.
\]
Let $T_{+}$ denote the set of triples in $T$ that contain at least one element $\beta \in \S_+$, i.e.,
\[
T_+ := \bra{(\beta, \gamma, \delta) \in T ~\middle| ~\textnormal{at least one of}~ \wh{f}(\beta), \wh{f}(\gamma) , \wh{f}(\delta) ~\textnormal{is positive}}.
\]

Since $|O_{\alpha + \beta}| = 2$ for all $\beta \in \S \setminus \bra{\alpha}$, this implies that any $\beta \in \S$ (in particular any $\beta \in \S_+$) appears in exactly one triple. For any $\beta \in \S_+$, say this triple is $(\beta, \beta_1, \beta_2)$. Equation~\eqref{eqn: boolstruct} implies that
\[
\wh{f}(\alpha)\wh{f}(\beta) + \wh{f}(\beta_1)\wh{f}(\beta_2) = 0.
\]
Since $\alpha$ and $\beta$ are both in $\S_+$, exactly one of $\beta_1, \beta_2$ is in $\S_+$ and the other is in $\S_-$. 

Thus each triple in $T_+$ contains exactly two elements of $\S_+$, and none of these elements appears in any other triple. Moreover each element of $\S_+$ appears in some triple in $T_+$. Accounting for $\alpha$ being in $\S_+$, we conclude that if $|T_+| = t$, then $|\S_+| = 2t + 1$, which is odd.
\end{proof}
We state another claim that we require.
\begin{claim}\label{claim: 2foldingplat}
Let $f : \f^n \to \pmone$ be any Boolean function. If there exists $\alpha \in \S$ such that $|O_{\alpha + \beta}| = 2$ for all $\beta \in \S \setminus \bra{\alpha}$, then $f$ is plateaued.
\end{claim}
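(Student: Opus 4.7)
The plan is to exploit the fact that under the hypothesis, the ``sparsity 2'' condition $|O_{\alpha+\beta}|=2$ gives a very rigid system of Titsworth equations. Specifically, fix any $\beta \in \S \setminus \bra{\alpha}$. By hypothesis there is a unique $(\gamma,\delta) \in \binom{\S\setminus\bra{\alpha,\beta}}{2}$ with $\alpha+\beta+\gamma+\delta=0$, so $\S\setminus\bra{\alpha}$ is canonically partitioned into such triples $(\beta,\gamma,\delta)$. Over each triple, the three nonzero directions $\alpha+\beta$, $\alpha+\gamma$, $\alpha+\delta$ each also coincide with $\gamma+\delta$, $\beta+\delta$, $\beta+\gamma$ respectively, and by hypothesis each of these three equivalence classes consists of exactly two unordered pairs.

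Next, apply Equation~\eqref{eqn: boolstruct} at each of the three directions. Taking care that the sum in Titsworth is over ordered pairs (so each unordered pair contributes with a factor of $2$), we obtain
\begin{align*}
\wh{f}(\alpha)\wh{f}(\beta) + \wh{f}(\gamma)\wh{f}(\delta) &= 0,\\
\wh{f}(\alpha)\wh{f}(\gamma) + \wh{f}(\beta)\wh{f}(\delta) &= 0,\\
\wh{f}(\alpha)\wh{f}(\delta) + \wh{f}(\beta)\wh{f}(\gamma) &= 0.
\end{align*}
All four coefficients are nonzero since $\alpha,\beta,\gamma,\delta \in \S$. Multiplying the first and second equations and dividing by $\wh{f}(\beta)\wh{f}(\gamma)$ yields $\wh{f}(\alpha)^2=\wh{f}(\delta)^2$; multiplying the first and third (dividing by $\wh{f}(\beta)\wh{f}(\delta)$) gives $\wh{f}(\alpha)^2=\wh{f}(\gamma)^2$; and multiplying the second and third (dividing by $\wh{f}(\gamma)\wh{f}(\delta)$) gives $\wh{f}(\alpha)^2=\wh{f}(\beta)^2$.

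Hence $|\wh{f}(\alpha)|=|\wh{f}(\beta)|=|\wh{f}(\gamma)|=|\wh{f}(\delta)|$ for every triple in the partition. Since every element of $\S\setminus\bra{\alpha}$ lies in some such triple, every Fourier coefficient in $\S$ has the same absolute value $|\wh{f}(\alpha)|$, i.e., $f$ is plateaued. There is no significant obstacle; the only subtlety is correctly accounting for the ordered-vs-unordered pair counting when invoking Titsworth's identity, and ensuring one is allowed to divide (which is immediate because all four coefficients are nonzero by membership in $\S$).
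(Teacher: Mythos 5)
Your proof is correct and takes essentially the same approach as the paper's: both exploit the fact that the hypothesis forces $O_{\alpha+\beta}$, $O_{\alpha+\gamma}$, and $O_{\alpha+\delta}$ (where $\alpha+\beta=\gamma+\delta$) to each have size exactly $2$, and then invoke Equation~\eqref{eqn: boolstruct} on these classes to conclude that $|\wh{f}(\alpha)|=|\wh{f}(\beta)|=|\wh{f}(\gamma)|=|\wh{f}(\delta)|$. The paper argues by contradiction, sorting the four coefficients by absolute value and applying a single Titsworth equation against the monotone ordering, whereas you argue directly by multiplying pairs of the three Titsworth equations and cancelling; the underlying structural idea is identical.
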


\begin{proof}[Proof of Claim~\ref{claim: 2foldingplat}]
Fix any $\alpha \in \S$ such that $|O_{\alpha + \beta}| = 2$ for all $\beta \in \S \setminus \bra{\alpha}$. Towards a contradiction, suppose $f$ is not plateaued. This implies existence of $\gamma \in \S$ such that  $|\wh{f}(\alpha)| \neq |\wh{f}(\gamma)|$. Proposition~\ref{prop:tit} implies existence of $\mu, \nu \in \S$ be such that $\alpha + \gamma = \mu + \nu$. We also have that
\[
\alpha + \nu = \mu + \gamma, \qquad  \alpha + \mu = \gamma + \nu.
\]
Arrange $\alpha, \gamma, \mu$ and $\nu$ in non-increasing order of the absolute values of their Fourier coefficients. Let the resultant sequence be $\delta_1, \delta_2, \delta_3, \delta_4$. Thus,
\[
|\wh{f}(\delta_1)| \geq |\wh{f}(\delta_2)| \geq |\wh{f}(\delta_3)| \geq |\wh{f}(\delta_4)|.
\]
Since $|\wh{f}(\alpha)| \neq |\wh{f}(\gamma)|$, at least one of these inequalities must be strict, which in particular implies that $|\wh{f}(\delta_1)||\wh{f}(\delta_2)| > |\wh{f}(\delta_3)||\wh{f}(\delta_4)|$. Now by the hypothesis, for all $1 \leq i < j \leq 4$, and $\bra{k,m}:=\bra{1,2,3,4}\setminus\bra{i,j}$ we have that $|O_{\delta_i+\delta_j}|=|O_{\delta_k+\delta_m}| = 2$. Thus, by Equation~\eqref{eqn: boolstruct} we have that $\wh{f}(\delta_1)\wh{f}(\delta_2) = - \wh{f}(\delta_3)\wh{f}(\delta_4)$, implying that $|\wh{f}(\delta_1)||\wh{f}(\delta_2)| = |\wh{f}(\delta_3)||\wh{f}(\delta_4)|$, which is a contradiction.
\end{proof}
The next claim shows that Theorem~\ref{thm: 3foldintro} holds true if $f$ is a plateaued function.
\begin{claim}\label{claim: platfold}
Let $f : \f^n \to \pmone$ be any plateaued Boolean function with $k > 4$. Then, for any $\alpha \in \S$, there exists $\beta \in \S \setminus \bra{\alpha}$ such that $|O_{\alpha + \beta}| \geq 3$.
\end{claim}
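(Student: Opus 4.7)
The plan is to argue by contradiction. Suppose there exists $\alpha \in \S$ such that $|O_{\alpha+\beta}| = 2$ for every $\beta \in \S \setminus \bra{\alpha}$. I will derive that both $|\S_+|$ and $|\S_-|$ are even, which directly contradicts Claim~\ref{claim: trivialfoldodd} (that claim, under exactly this hypothesis, forces one of $|\S_+|, |\S_-|$ to be odd).

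The first step is to pin down the magnitude of the nonzero Fourier coefficients. Since $f$ is plateaued with $\wh{f}(\alpha) \in \bra{0, x, -x}$, Parseval's identity in Equation~\eqref{eqn: parseval} gives $k x^2 = 1$, hence $x = 1/\sqrt{k}$. Now invoke the granularity statement in Observation~\ref{obs: gran}: the value $1/\sqrt{k}$ must be an integer multiple of $1/2^n$. This forces $\sqrt{k}$ to be a positive rational (namely $2^n$ divided by an integer), which, as the square root of a positive integer, must itself be an integer and in fact a divisor of $2^n$, so a power of $2$. Equivalently, $k = 4^m$ for some integer $m \geq 1$, and the hypothesis $k > 4$ gives $m \geq 2$.

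The second step is to count $|\S_+|$ and $|\S_-|$. Since $\chi_\alpha(0^n) = 1$ for every $\alpha$, the Fourier expansion at the origin yields
\[
f(0^n) = \sum_{\alpha \in \S} \wh{f}(\alpha) = \frac{|\S_+| - |\S_-|}{\sqrt{k}}.
\]
As $f(0^n) \in \pmone$, we get $|\S_+| - |\S_-| = \pm \sqrt{k} = \pm 2^m$. Combining this with $|\S_+| + |\S_-| = k = 4^m$ gives $|\S_+| = 2^{m-1}(2^m \pm 1)$ and $|\S_-| = 2^{m-1}(2^m \mp 1)$. Since $m \geq 2$, the common factor $2^{m-1}$ is even, so both $|\S_+|$ and $|\S_-|$ are even, contradicting Claim~\ref{claim: trivialfoldodd} and finishing the proof.

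The only delicate step is showing $k$ must be a fourth power via granularity; everything else is then arithmetic. This step also explains the hypothesis $k > 4$: at $k = 4$ we would have $m = 1$ and $(|\S_+|, |\S_-|)$ equal to $(1,3)$ or $(3,1)$, both odd, so the parity argument would fail. For $k > 4$, i.e.\ $k \geq 16$, the factor $2^{m-1}$ contributes the necessary evenness, and the contradiction goes through cleanly.
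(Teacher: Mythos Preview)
Your proof is correct and follows essentially the same approach as the paper's: assume the contrary, use Parseval plus the plateaued hypothesis to get $|\wh f(\gamma)|=1/\sqrt{k}$, invoke granularity to force $k=4^m$ with $m\ge 2$, evaluate $f$ at a single point to solve for $|\S_+|$ and $|\S_-|$, observe both are even, and contradict Claim~\ref{claim: trivialfoldodd}. The only cosmetic difference is that you evaluate at $0^n$ (where $\chi_\gamma(0^n)=1$ makes the computation immediate) whereas the paper evaluates at $1^n$; the arithmetic and conclusion are identical.
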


\begin{proof}
Towards a contradiction, let $\alpha \in \S$ be such that $|O_{\alpha + \beta}| = 2$ for all $\beta \in \S \setminus \bra{\alpha}$. Let $s = |\S_+|$ and $t = |\S_-|$.
We now prove that $s$ and $t$ must both be even.

Since $f$ is plateaued, Equation~\eqref{eqn: parseval} implies that $|\wh{f}(\gamma)| = 1/\sqrt{k}$ for all $\gamma \in \S$. By Observation~\ref{obs: gran} we know that $1/\sqrt{k} = c/2^n$ for some $c \in \Z$. This implies that $k = 2^{2n}/c^2$. Since $k$ is an integer, $c$ must be a power of 2, and hence $k = 2^{2h}$ for some $h > 1$ (since we assumed $k > 4$). 

Assume $f(1^n) = 1$ (else run the same argument with $f$ replaced by $-f$). This implies 
\[
\sum_{\gamma \in \S_+} \wh{f}(\gamma) - \sum_{\delta \in \S_-} \wh{f}(\delta) = 1.
\]
That is, $(s - t)/\sqrt{k} = 1$. Since $s + t = k$, this implies $s = \frac{k}{2} + \frac{\sqrt{k}}{2}$ and $t = \frac{k}{2} - \frac{\sqrt{k}}{2}$. Since $k = 2^{2h}$ for some $h > 1$ (since we assumed $k > 4$), $s$ and $t$ are both even. This is a contradiction in view of Claim~\ref{claim: trivialfoldodd}.
\end{proof}

We next use Claim~\ref{claim: 2foldingplat} to remove the assumption of $f$ being plateaued in the previous claim, which proves Theorem~\ref{thm: 3foldintro}.

\begin{proof}[Proof of Theorem~\ref{thm: 3foldintro}]
Towards a contradiction, suppose there exists $\alpha \in \S$ such that $|O_{\alpha + \beta}| = 2$ for all $\beta \in \S \setminus \bra{\alpha}$. Claim~\ref{claim: 2foldingplat} implies that $f$ must be plateaued. Next, Claim~\ref{claim: platfold} implies that there must exist $\gamma \in \S$ such that $|O_{\alpha + \gamma}| \geq 3$, which is a contradiction.
\end{proof}

We have shown that for any Boolean function $f : \f^n \to \pmone$, each $\alpha \in \S$ participates in at least one non-trivial folding direction. We next investigate if an element $\alpha \in \S$ can participate in exactly one non-trivial folding, and first prove Part~\ref{item: allfunctions} of Theorem~\ref{thm: singledirectionintro}. This states that if an element $\alpha \in \S$ only participates in one non-trivial folding direction, say $\delta$, then all elements of $\S$ must participate in the folding direction $\delta$.

\begin{proof}[Proof of Part~\ref{item: allfunctions} of Theorem~\ref{thm: singledirectionintro}]
Suppose there exist $(\alpha, \beta) \in \binom{\S}{2}$ such that $|O_{\alpha + \gamma}| = 2$ for all $\gamma \in \S \setminus \bra{\alpha, \beta}$. Let $|O_{\alpha + \beta}| = q > 2$ (Theorem~\ref{thm: 3foldintro} shows $q$ cannot equal 2), and say $O_{\alpha + \beta} = \bra{(\alpha, \beta), (\zeta_{1, 1}, \zeta_{1, 2}), (\zeta_{2, 1}, \zeta_{2, 2}), \dots, (\zeta_{{q-1}, 1}, \zeta_{{q-1}, 2})}$. Let $\P := \bra{\zeta \in \S ~\middle|~ \zeta~\textnormal{appears in a pair in}~O_{\alpha + \beta}}$. Clearly $|\P| = 2q$. Assume without loss of generality (justified by Proposition~\ref{prop:wlog}) that $\wh{f}(\alpha)>0$ and $\wh{f}(\beta)>0$.

If $\P = \S$, then $|O_{\alpha + \beta}| = k/2$, which proves the theorem. Otherwise pick any $\delta \in \S \setminus \P$. 
Since $|O_{\alpha + \delta}| = 2$ by assumption, let $O_{\alpha + \delta} = \bra{(\alpha, \delta), (\mu, \nu)}$, i.e.~$\alpha + \delta = \mu + \nu$. Next note that $\mu\neq\beta$, since otherwise $\alpha + \beta = \delta + \nu$, which implies that $(\delta, \nu) \in O_{\alpha + \beta}$, contradicting our choice of $\delta$.
Similarly, $\nu \neq \beta$. Thus, we also obtain
\begin{equation}\label{eqn: 2dirs}
O_{\alpha + \mu} = \bra{(\alpha, \mu), (\delta, \nu)}, \qquad O_{\alpha + \nu} = \bra{(\alpha, \nu), (\delta, \mu)}.
\end{equation}
Arrange $\alpha, \delta, \mu$ and $\nu$ in non-increasing order of their magnitudes; let the resultant ordering be $\delta_1, \delta_2, \delta_3, \delta_4$. Thus we have, $|\wh{f}(\delta_1)| \geq |\wh{f}(\delta_2)| \geq |\wh{f}(\delta_3)| \geq |\wh{f}(\delta_4)|$.
By Equation~\eqref{eqn: boolstruct} and Equation~\eqref{eqn: 2dirs}, we have $|\wh{f}(\delta_1)|| \wh{f}(\delta_2)| = |\wh{f}(\delta_3)||\wh{f}(\delta_4)|$, from which we conclude $|\wh{f}(\delta_1)| = |\wh{f}(\delta_2)| = |\wh{f}(\delta_3)| = |\wh{f}(\delta_4)|$.
Hence,
\begin{equation}\label{eqn: almostallsame}
|\wh{f}(\alpha)| = |\wh{f}(\delta)|~\text{for all}~\delta \in \S \setminus \P.
\end{equation}

Next, consider any $(\zeta_{i, 1}, \zeta_{i, 2}) \in O_{\alpha + \beta} \setminus \bra{(\alpha, \beta)}$. Since $|O_{\alpha + \zeta_{i, 1}}| = 2$ by our hypothesis, we have $O_{\alpha + \zeta_{i, 1}} = \bra{(\alpha, \zeta_{i, 1}), (\beta, \zeta_{i, 2})}$. Similarly, we have $O_{\alpha + \zeta_{i, 2}} = \bra{(\alpha, \zeta_{i, 2}), (\beta, \zeta_{i, 1})}$.
From Equation~\eqref{eqn: boolstruct} we have 
\begin{equation}\label{eqn: alphabetazeta}
\wh{f}(\alpha) \wh{f}(\zeta_{i, 1}) = - \wh{f}(\beta)\wh{f}(\zeta_{i, 2}), \qquad \wh{f}(\alpha) \wh{f}(\zeta_{i, 2}) = -\wh{f}(\beta)\wh{f}(\zeta_{i, 1}).
\end{equation}
Multiplying each side of these equalities, and since we have assumed $\wh{f}(\alpha) > 0$ and $\wh{f}(\beta) > 0$, we obtain
\begin{equation}\label{eqn: alphabetasame}
\wh{f}(\alpha) = \wh{f}(\beta) = a, ~\textnormal{say}
\end{equation}
and
\begin{equation}\label{eqn: ppairssame}
|\wh{f}(\zeta_{i, 1})| = |\wh{f}(\zeta_{i, 2})| = a_i,~\textnormal{say, for all}~i \in [q - 1]. \end{equation}
By substituting $\gamma = \alpha + \beta$ in Equation~\eqref{eqn: boolstruct}, we have
\begin{equation}\label{eqn: alpha+beta}
\wh{f}(\alpha)\wh{f}(\beta) + \sum_{i \in [q - 1]}\wh{f}(\zeta_{i, 1})\wh{f}(\zeta_{i, 2}) = 0.
\end{equation}
We also conclude from Equations~\eqref{eqn: alphabetazeta} and \eqref{eqn: alphabetasame} that $\wh{f}(\zeta_{i, 1}) = - \wh{f}(\zeta_{i, 2})$ for all $i \in [q]$. Hence, Equations~\eqref{eqn: alpha+beta},\eqref{eqn: alphabetasame}, and \eqref{eqn: ppairssame} imply
\begin{equation}\label{eqn: toomanyequations}
a^2 - \sum_{i \in [q]}a_i^2 = 0.
\end{equation}

We next use Parseval's identity to deduce the value of $a$. We have
\begin{align*}
1 & = \sum_{\delta \in \S}\wh{f}(\delta)^2 \tag*{by Equation~\eqref{eqn: parseval}}\\
& = \sum_{\delta \notin \P} \wh{f}(\delta)^2 + \sum_{\delta \in \P} \wh{f}(\delta)^2\\
& = (k - 2q)a^2 + \sum_{\delta \in \P} \wh{f}(\delta)^2 \tag*{since $|\P| = 2q$, and by Equations~\eqref{eqn: almostallsame} and \eqref{eqn: alphabetasame}}\\
& = (k - 2q + 2)a^2 + 2\sum_{i \in [q-1]}a_i^2 \tag*{by Equations~\eqref{eqn: alphabetasame} and \eqref{eqn: ppairssame}}\\
& = (k - 2q + 4)a^2 \tag*{by Equation~\eqref{eqn: toomanyequations}}
\end{align*}

Hence,
\begin{equation}\label{eqn: aval}
a = \frac{1}{\sqrt{k - 2q + 4}}.
\end{equation}

By Observation~\ref{obs: gran} we have $\frac{1}{\sqrt{k - 2q + 4}} = \frac{c}{2^n}$ for some $c \in \Z$. Thus, $k - 2q + 4 = 2^{2n}/c^2$. Since each side of this equality is an integer, $c$ must be a power of 2, and hence
\begin{equation}\label{eqn: kevenpower}
k = 2^{2h} + 2q - 4~\text{for some}~h \in \Z.
\end{equation}
Define $s :=|\S_+|$ and $t := |\S_-|$. Clearly $s + t = k$. We next show that $s - t = f(1^n)/a$. We have
\begin{align}
f(1^n)=\sum_{\delta \in \S_+} |\wh{f}(\delta)|-\sum_{\delta \in \S_-} |\wh{f}(\delta)|.
\label{eqn:f1n}
\end{align}
By our earlier discussion, for each $i \in [q-1]$ exactly one of $(\zeta_{i,1},\zeta_{i,2})$ is in $\S_+$ and the other is in $\S_-$. Furthermore, $|\wh{f}(\zeta_{i,1})|=|\wh{f}(\zeta_{i,2})|$. Thus, the contribution of parities in $\P \setminus\bra{\alpha, \beta}$ to the above sum is $0$. Hence it is enough to account for the contribution from $(\S \setminus \P) \cup \bra{\alpha, \beta}$, which contains $s - (q-1)$ elements of $\S_+$ and $t - (q-1)$ elements of $\S_-$. Now, recall from Equations~\eqref{eqn: almostallsame} and~\eqref{eqn: alphabetasame} that $\wh{f}(\alpha), \wh{f}(\beta)$ and the absolute values of coefficients of parities in $\S \setminus \P$ are all equal to $a$. Thus, we have from Equation~(\ref{eqn:f1n}),
\begin{align}
f(1^n) & =(s-(q-1))a-(t-(q-1))a \nonumber \\
\implies s-t & =\frac{f(1^n)}{a}. \nonumber
\end{align}
Thus we have that
\begin{align}
s&=\frac{s+t}{2}+\frac{s-t}{2} \nonumber \\
&=\frac{k}{2}+\frac{f(1^n)}{2a} \nonumber\\
& = \frac{k}{2} + \frac{f(1^n)\sqrt{k - 2q + 4}}{2} \tag*{by Equation~\eqref{eqn: aval}} \nonumber\\
& = 2^{2h-1} + q - 2 + f(1^n)\cdot 2^{h-1},\label{eqn: s}
\end{align}
where the last equality follows from Equation~\eqref{eqn: kevenpower}.
Now, if $h = 1$, then $|O_{\alpha + \beta}| = q = k/2$ (by Equation~\eqref{eqn: kevenpower}), which proves the theorem. The case $h=0$ is ruled out by Equation~\eqref{eqn: kevenpower} along with the fact that $q \leq k/2$. In the rest of the proof we assume that $h > 1$ and derive a contradiction. By Equation~(\ref{eqn: s}) the parity of $s$ equals the parity of $q$.

Next, by a different counting argument, we show that the parity of $s$ does not equal the parity of $q$, which will yield the desired contradiction. This counting mimics the argument in the proof of Claim~\ref{claim: trivialfoldodd}. We recall here that by our assumption $\alpha, \beta \in \S_+$.

Consider the set of unordered triples
\[
T = \bra{(\mu, \gamma, \delta) \in \binom{\S \setminus \bra{\alpha}}{3} ~\middle|~ \alpha + \mu + \gamma + \delta = \emptyset~\text{and}~\mu, \gamma, \delta \in \S}.
\]
As in the proof of Claim~\ref{claim: trivialfoldodd}, define
\[
T_+ := \bra{(\mu, \gamma, \delta) \in T ~\middle| ~\textnormal{at least one of}~ \wh{f}(\mu), \wh{f}(\gamma) , \wh{f}(\delta) ~\textnormal{is positive}}.
\]

First consider any parity $\nu \in \S \setminus \P$. Since $|O_{\alpha + \nu}| = 2$ by our definition of $\P$, say $O_{\alpha + \nu} = \bra{(\alpha, \nu), (\zeta, \xi)}$. Thus, $\nu$ appears in exactly one triple in $T_+$, namely $(\nu, \zeta, \xi)$. Since $\alpha \in \S_+$, Equation~\eqref{eqn: boolstruct} implies that each such triple contributes to either zero or two elements of $\S_+$. Hence these triples, along with the elements $\alpha$ and $\beta$, account for an even number of elements of $\S_+$. Now by our earlier discussion each pair $(\zeta_{i,1},\zeta_{i,2}) \in O_{\alpha+\beta} \setminus \bra{(\alpha, \beta)}$ for $i \in [q-1]$ contributes exactly one parity to $\S_+$. Hence, the parity of $s$ equals the parity of $(q-1)$, which is a contradiction.

\end{proof}

One might expect that a stronger statement than the one in Part~\ref{item: allfunctions} of Theorem~\ref{thm: singledirectionintro} is possible. For example, given any Boolean function $f : \f^n \to \pmone$ and $\alpha \in \S$, it is feasible that there must exist at least two parities $\beta \in \S$ with $|O_{\alpha + \beta}| > 2$.
We rule out this possibility in the next proof, hence showing that the hypothesis in Part~\ref{item: allfunctions} of Theorem~\ref{thm: singledirectionintro} can indeed by satisfied. We exhibit an explicit function that witnesses this.

\begin{proof}[Proof of Part~\ref{item: existsfunction} of Theorem~\ref{thm: singledirectionintro}]

Recall from Definition~\ref{def:addressing} that $\ADD_k$ denotes the addressing function on $n := ((\log k)/2 + \sqrt{k})$ many input bits. Let $(\mathbf{x}, \mathbf{y})$ be a generic input to $\ADD_k$, where $\mathbf{x} \in \f^{(\log k)/2}$ and $\mathbf{y} \in \f^{\sqrt{k}}$.

Define $g : \f^{n + 2} \to \pmone$ by
\begin{equation}\label{eqn: ifelsefn}
g(z_1, z_2, \mathbf{x}, \mathbf{y}) : = \begin{cases}
b_1 & \text{if } \ADD_k(\mathbf{x}, \mathbf{y}) = 1\\
b_2 & \text{if } \ADD_k(\mathbf{x}, \mathbf{y}) = -1,
\end{cases}
\end{equation}
where $b_1 = (-1)^{z_1}$ and $b_2 = (-1)^{z_2}$.
Now,
\begin{align*}
g(z_1, z_2, \mathbf{x}, \mathbf{y}) & = b_1 \left(\frac{1 + \ADD_k(\mathbf{x}, \mathbf{y})}{2}\right) + b_2 \left(\frac{1 - \ADD_k(\mathbf{x}, \mathbf{y})}{2}\right)\\
& = \frac{b_1}{2} + \frac{b_2}{2} + \frac{b_1 \cdot \ADD_k(\mathbf{x}, \mathbf{y})}{2} - \frac{b_2 \cdot \ADD_k(\mathbf{x}, \mathbf{y})}{2}.
\end{align*}
For the rest of this proof, we view elements of $\f^{n+2}$ as subsets of the input variables, via the natural correspondence. Under this equivalence, addition of elements in $\f^{n+2}$ corresponds to the symmetric difference of their respective sets.

Let $\S$ denote the Fourier support of $\ADD_k$. Let $\S'$ denote the Fourier support of $g$, and let $\alpha = \bra{z_1}$ and $\beta = \bra{z_2}$. Since $f$ is non-constant, we have
\[
\S' = \bra{\alpha, \beta} \cup \bra{\alpha \cup T ~\middle|~ T \in \S} \cup \bra{\beta \cup T ~\middle|~ T \in \S}.
\]

By Equation~\eqref{eqn: addexpansion}, every element in $\S$ contains exactly one variable from $\mathbf{y}$. Hence, for any $T_1, T_2 \in \S$, the set $T_1 \triangle T_2$ contains either no variables from $\mathbf{y}$, or exactly 2 variables from $\mathbf{y}$. Thus for any $T, T_1, T_2 \in \S$, we have $T \neq T_1 \triangle T_2$.
It now follows that for any $T \in \S$,
\begin{align*}
O_{\alpha \triangle (\alpha \cup T)} & = O_{T} =  \bra{(\alpha, \alpha \cup T), (\beta, \beta \cup T)},\\
O_{\alpha \triangle (\beta \cup T)} & = O_{\alpha \cup \beta \cup T} = \bra{(\alpha, \beta \cup T), (\beta, \alpha \cup T)}.
\end{align*}
Hence, $|O_{\alpha \triangle \gamma}| = 2$ for all $\gamma \in \S' \setminus \bra{\alpha, \beta}$, which proves the theorem.

\end{proof}

\section*{Acknowledgements}
We thank Prahladh Harsha, Srikanth Srinivasan, Sourav Chakraborty and Manaswi Paraashar for useful discussions.

\bibliography{bibo}

\begin{thebibliography}{TWXZ13}

\bibitem[ABT19]{ABT19}
Anurag Anshu, Naresh~Goud Boddu, and Dave Touchette.
\newblock Quantum log-approximate-rank conjecture is also false.
\newblock In {\em 60th {IEEE} Annual Symposium on Foundations of Computer
  Science, {FOCS} 2019, Baltimore, Maryland, USA, November 9-12, 2019}, pages
  982--994, 2019.

\bibitem[BC99]{BC99}
Anna Bernasconi and Bruno Codenotti.
\newblock Spectral analysis of {B}oolean functions as a graph eigenvalue
  problem.
\newblock {\em {IEEE} Trans. Computers}, 48(3):345--351, 1999.

\bibitem[CMS19]{CMS19}
Arkadev Chattopadhyay, Nikhil~S. Mande, and Suhail Sherif.
\newblock The log-approximate-rank conjecture is false.
\newblock In {\em Proceedings of the 51st Annual {ACM} {SIGACT} Symposium on
  Theory of Computing, {STOC} 2019, Phoenix, AZ, USA, June 23-26, 2019}, pages
  42--53, 2019.

\bibitem[HHL18]{HHL18}
Hamed Hatami, Kaave Hosseini, and Shachar Lovett.
\newblock Structure of protocols for {XOR} functions.
\newblock {\em {SIAM} J. Comput.}, 47(1):208--217, 2018.

\bibitem[Lov16]{Lov16}
Shachar Lovett.
\newblock Communication is bounded by root of rank.
\newblock {\em J. {ACM}}, 63(1):1:1--1:9, 2016.

\bibitem[LS88]{LS88}
L{\'{a}}szl{\'{o}} Lov{\'{a}}sz and Michael~E. Saks.
\newblock Lattices, m{\"{o}}bius functions and communication complexity.
\newblock In {\em 29th Annual Symposium on Foundations of Computer Science,
  White Plains, New York, USA, 24-26 October 1988}, pages 81--90, 1988.

\bibitem[LS09]{LS09}
Troy Lee and Adi Shraibman.
\newblock Lower bounds in communication complexity.
\newblock {\em Foundations and Trends in Theoretical Computer Science},
  3(4):263--398, 2009.

\bibitem[MO09]{MO09}
Ashley Montanaro and Tobias Osborne.
\newblock On the communication complexity of {XOR} functions.
\newblock {\em CoRR}, abs/0909.3392, 2009.

\bibitem[San19]{San19}
Swagato Sanyal.
\newblock Fourier sparsity and dimension.
\newblock {\em Theory of Computing}, 15(1):1--13, 2019.

\bibitem[SdW19]{SdW19}
Makrand Sinha and Ronald de~Wolf.
\newblock Exponential separation between quantum communication and logarithm of
  approximate rank.
\newblock In {\em 60th {IEEE} Annual Symposium on Foundations of Computer
  Science, {FOCS} 2019, Baltimore, Maryland, USA, November 9-12, 2019}, pages
  966--981, 2019.

\bibitem[STV17]{STV17}
Amir Shpilka, Avishay Tal, and Ben~Lee Volk.
\newblock On the structure of {B}oolean functions with small spectral norm.
\newblock {\em Comput. Complex.}, 26(1):229--273, 2017.

\bibitem[Tit62]{Tit62}
Robert~C Titsworth.
\newblock {\em Correlation properties of cyclic sequences}.
\newblock PhD thesis, California Institute of Technology, 1962.

\bibitem[TWXZ13]{TWXZ13}
Hing~Yin Tsang, Chung~Hoi Wong, Ning Xie, and Shengyu Zhang.
\newblock Fourier sparsity, spectral norm, and the log-rank conjecture.
\newblock In {\em 54th Annual {IEEE} Symposium on Foundations of Computer
  Science, {FOCS} 2013, 26-29 October, 2013, Berkeley, CA, {USA}}, pages
  658--667, 2013.

\end{thebibliography}

\appendix

\section{Ruling out sufficiency of Proposition~\ref{prop:tit}}\label{sec: app}

In this section, we prove that the conditions in Proposition~\ref{prop:tit} are not sufficient for a function to be Boolean.  To the best of our knowledge, ours is the first work to show this.
\begin{theorem}\label{thm: ceg}
There exists a set $\S \subseteq \f^n$ such that $|O_{\alpha + \beta}| \geq 2$ for all $(\alpha, \beta) \in \binom{\S}{2}$, but $\S$ is not the Fourier support of any Boolean function $f : \f^n \to \pmone$.
\end{theorem}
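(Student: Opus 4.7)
The plan is to exhibit a specific small set $\S \subseteq \f^n$ that satisfies the necessary condition of Proposition~\ref{prop:tit} but for which the system of Titsworth equations (Equation~\eqref{eqn: boolstruct}) has no solution with every $c_\alpha$ nonzero, hence no Boolean function can have Fourier support exactly $\S$. Working in $\f^n$ for any $n\geq 3$, I propose
\[
\S := \bra{0^n,\ e_1,\ e_2,\ e_1+e_2,\ e_3,\ e_1+e_3},
\]
the union of the three parallel cosets $\bra{0, e_1}$, $e_2+\bra{0, e_1}$, and $e_3+\bra{0, e_1}$ of the subgroup $\bra{0, e_1}$. The ``parallel-coset'' structure drives both the verification of Proposition~\ref{prop:tit} and the algebraic inconsistency that follows.

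The first step is to verify Proposition~\ref{prop:tit} by direct enumeration of the $\binom{6}{2}=15$ pair sums. One checks that the direction $e_1$ receives three unordered pairs (namely $\bra{0, e_1}$, $\bra{e_2, e_1+e_2}$, and $\bra{e_3, e_1+e_3}$), while each of the remaining six nonzero directions in $\S+\S$ (namely $e_2, e_1+e_2, e_3, e_1+e_3, e_2+e_3, e_1+e_2+e_3$) receives exactly two pairs; hence $|O_\gamma|\geq 2$ for every $\gamma \in (\S+\S)\setminus\bra{0^n}$. The second, algebraic step is to assume toward a contradiction that some $f:\f^n \to \pmone$ has Fourier support exactly $\S$, set $c_\alpha := \wh{f}(\alpha) \neq 0$ for $\alpha \in \S$, and apply Equation~\eqref{eqn: boolstruct} at the paired directions $(e_2, e_1+e_2)$ and $(e_3, e_1+e_3)$:
\[
c_0 c_{e_2} + c_{e_1} c_{e_1+e_2} = 0, \qquad c_0 c_{e_1+e_2} + c_{e_1} c_{e_2} = 0,
\]
\[
c_0 c_{e_3} + c_{e_1} c_{e_1+e_3} = 0, \qquad c_0 c_{e_1+e_3} + c_{e_1} c_{e_3} = 0.
\]
Dividing within each pair (valid since no coefficient vanishes) gives $c_{e_1+e_2}^2 = c_{e_2}^2$ and $c_{e_1+e_3}^2 = c_{e_3}^2$; substituting back yields $c_{e_1+e_2} = \eta\, c_{e_2}$, $c_{e_1+e_3} = \eta'\, c_{e_3}$, and $c_0 = -\eta\, c_{e_1} = -\eta'\, c_{e_1}$ for some $\eta, \eta' \in \pmone$, so $\eta = \eta'$ (using $c_{e_1}\neq 0$). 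The Titsworth equation at $\gamma = e_2+e_3$ then collapses to
\[
c_{e_2}c_{e_3} + c_{e_1+e_2} c_{e_1+e_3} = (1+\eta^2)\, c_{e_2}c_{e_3} = 2 c_{e_2}c_{e_3} = 0,
\]
contradicting the non-vanishing of $c_{e_2}$ and $c_{e_3}$.

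The main obstacle is really finding the right $\S$: one needs a set simultaneously ``dense enough'' to satisfy Proposition~\ref{prop:tit} and ``structured enough'' that the resulting Titsworth system over $\S$ is overdetermined. A short combinatorial check rules out smaller candidates---sets of size at most $4$ satisfying Proposition~\ref{prop:tit} are forced (since $|O_\gamma|\leq |\S|/2$) to satisfy $\S+\S=\S$, making $\S$ a coset of a subgroup and easily realisable as a Boolean support; and no $\S$ of size $5$ satisfies Proposition~\ref{prop:tit} at all, because after translating $0$ into $\S$ each of the four nonzero elements would have to occur in exactly one ``dependent triple'' (three distinct indices $i,j,k$ with $\beta_i+\beta_j+\beta_k=0^n$), yet $4$ is not a multiple of $3$. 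Size $6$ with a three-parallel-coset structure is therefore the first place where such a counterexample can arise, and the proposal above exploits that structure.
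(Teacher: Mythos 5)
Your proof is correct, and it is structurally very close to the paper's, though the final algebraic step is more elaborate than it needs to be. Both proofs exhibit the same core configuration: the six parities that do the work form three parallel cosets of a two-element subgroup of $\f^n$. Your $\S = \bra{0,e_1} + \bra{0,e_2,e_3}$ is the minimal such set (size $6$, ambient dimension $n\geq 3$); the paper instead takes $\S = \bra{\bra{1},\dots,\bra{n}}\cup\bra{\bra{1,i,n}:2\leq i\leq n-1}$ of size $2n-2$, inside which the six sets $\bra{2},\bra{3},\bra{4},\bra{1,2,n},\bra{1,3,n},\bra{1,4,n}$ are again three cosets of a two-element subgroup, and needs $n\geq 5$ for that sextet to be present. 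I verified your folding count ($O_{e_1}$ has three pairs, the other six nonzero directions each have exactly two, for a total of $\binom{6}{2}=15$) and your Titsworth computation leading to $2c_{e_2}c_{e_3}=0$; both are correct.

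The main difference is in how the contradiction is extracted. The paper only tracks the \emph{signs} $s_i=\mathrm{sgn}(\wh{f}(S_i))$ and multiplies three two-term Titsworth relations to get $1=-1$. Your proof tracks magnitudes, establishing $c_{e_1+e_2}=\eta\,c_{e_2}$, $c_{e_1+e_3}=\eta\,c_{e_3}$, $c_0=-\eta\,c_{e_1}$ and then collapsing the $e_2+e_3$ relation. This is correct but heavier than necessary: in your configuration the pure sign-bookkeeping argument works too. From Titsworth at $e_2$, $e_3$ and $e_2+e_3$ one gets $s_0 s_{e_1} s_{e_2} s_{e_1+e_2}=-1$, $s_0 s_{e_1} s_{e_3} s_{e_1+e_3}=-1$, and $s_{e_2} s_{e_3} s_{e_1+e_2} s_{e_1+e_3}=-1$; multiplying these three yields $1=-1$ directly, exactly parallel to the paper's argument. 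So you could have shortened the second half of the proof considerably.

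The closing aside ruling out $|\S|\leq 5$ is not needed for the theorem, but it is essentially correct: for $|\S|=4$ the parallelogram identity $\alpha+\beta+\gamma+\delta=0$ is forced, making $\S$ an affine $2$-flat, and for $|\S|=5$ any two distinct ``dependent triples'' among the four nonzero shifts would share two elements and hence coincide, so at most one triple exists and cannot cover all four shifts. You could tighten the phrasing there (``at least one'' rather than ``exactly one'' dependent triple per element, combined with the observation that there is at most one triple in total), but as motivation it is fine.
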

For sets $A, B \subseteq [n]$, let $A \triangle B$ denote the symmetric difference of the sets $A$ and $B$. For $x \in \reals \setminus \bra{0}$, define $\mathrm{sgn}(x) := -1$ if $x < 0$, and $\mathrm{sgn}(x) := 1$ if $x > 0$.
\begin{proof}
For the purpose of this proof, we
require the natural equivalence between elements of $\f^n$ and subsets of $[n]$. Under this equivalence, the sum of two elements in $\f^n$ corresponds to the symmetric difference of the corresponding sets in $[n]$.
The following is a property of symmetric difference. For any sets $A, B, C, D \subseteq [n]$,
\begin{equation}\label{eqn: symmdiffprop}
A \triangle B = C \triangle D \iff A \triangle C = B \triangle D. 
\end{equation}
Hence it suffices to exhibit a collection $\S$ of subsets of $[n]$ such that for all $(S, T) \in \binom{S}{2}$, there exist $(U, V) \neq (S, T) \in \binom{\S}{2}$ with $S \triangle T = U \triangle V$, and $\S$ is not the Fourier support of any Boolean function $f : \f^n \to \pmone$.
To this end, consider the set 
\[
\S = \bra{\bra{1}, \dots, \bra{n}, \bra{1, 2, n}, \dots, \bra{1, n-1, n}}.
\]
Below we list out all equivalence classes of $\binom{\S}{2}$.
For any distinct $i, j \in \bra{2, 3, \dots, n - 1}$ we have
\begin{align*}
\bra{i} \triangle \bra{j} & = \bra{1, i, n} \triangle \bra{1, j, n}.
\end{align*}
Thus
\begin{align}\label{eqn: eqclassceg}
O_{\bra{i} \triangle \bra{j}} = \bra{(\bra{i}, \bra{j}), (\bra{1, i, n}, \bra{1, j, n})} \quad \forall i, j \in \bra{2, 3, \dots, n - 1}.
\end{align}
For any $i \in \bra{2, 3, \dots, n-1}$ we have
\begin{align*}
\bra{1} \triangle \bra{i} & = \bra{n} \triangle \bra{1, i, n},\\
\bra{n} \triangle \bra{i} & = \bra{1} \triangle \bra{1, i, n}.
\end{align*}
We also have
\begin{align*}
\bra{1} \triangle \bra{n} = \bra{i} \triangle \bra{1, i, n} \quad\text{for all }i \in \bra{2, 3, \dots, n-1}.
\end{align*}
Along with Equation~\eqref{eqn: symmdiffprop}, these establish the fact that $|O_{\alpha + \beta}| \geq 2$ for all $(\alpha, \beta) \in \binom{\S}{2}$. We now provide a proof of the fact that $\S$ cannot be the Fourier support of any Boolean function. Consider the following six sets.
\begin{align*}
S_1 = \bra{2}, S_2 = \bra{3}, S_3 = \bra{4}, S_4 = \bra{1, 2, n}, S_5 = \bra{1, 3, n}, S_6 = \bra{1, 4, n}.
\end{align*}

If $\S$ is the support of a Boolean function, then Equation~\eqref{eqn: boolstruct} holds true. Equation~\eqref{eqn: eqclassceg} then implies
\begin{align*}
\wh{f}(S_1)\wh{f}(S_2) + \wh{f}(S_4)\wh{f}(S_5) & = 0,\\
\wh{f}(S_1)\wh{f}(S_3) + \wh{f}(S_4)\wh{f}(S_6) & = 0,\\
\wh{f}(S_2)\wh{f}(S_3) + \wh{f}(S_5)\wh{f}(S_6) & = 0.
\end{align*}
Let $s_i = \mathrm{sgn}(\wh{f}(S_i))$ for $i \in [6]$. Thus,
\begin{align*}
s_1s_2 & = -s_4s_5\\
s_1s_3 & = -s_4s_6\\
s_2s_3 & = -s_5s_6.
\end{align*}
Multiplying out the left hand sides and right hand sides of the above, we obtain $1 = -1$, which is a contradiction. Hence $\S$ cannot be the support of any Boolean function.
\end{proof}

\section{Folding properties of the addressing function}\label{sec: addfold}
In this section analyze folding properties of the addressing function. 

\begin{claim}
The function $\ADD_k :\f^{\frac{1}{2}\log k + \sqrt{k}} \to \pmone$ as defined in Definition~\ref{def:addressing} is
\begin{enumerate}
    \item $(1, 1/2 - o(1))$-folding, and
    \item not $(\Omega(1), \ell)$-folding for any $\ell \geq 1/2$.
\end{enumerate}
\end{claim}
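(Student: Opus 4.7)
The plan is to directly compute the Fourier support of $\ADD_k$, classify the folding directions $\gamma$, and count $|O_\gamma|$ in each case. First I would observe that the standard Fourier expansion of the indicator $\1[\mathrm{int}(\mathbf{x}) = i]$ multiplied by $\chi_{\{i\}}(\mathbf{y}) = (-1)^{y_i}$ yields Fourier support
\[
\S = \bra{(S, \{i\}) ~\middle|~ S \subseteq [\tfrac{1}{2}\log k],~i \in [\sqrt{k}]},
\]
with every Fourier coefficient of magnitude $1/\sqrt{k}$; in particular $|\S|=k$, and each element of $\S$ has exactly one $1$ among the $\mathbf{y}$-coordinates.

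Next I would classify folding directions. For any $\alpha_1, \alpha_2 \in \S$ with $\alpha_\ell = (S_\ell, \{i_\ell\})$, the sum $\alpha_1 + \alpha_2 = (T, Y)$ satisfies $T = S_1 + S_2$ and $Y = \{i_1\} + \{i_2\}$, so $|Y| \in \{0, 2\}$. For $\gamma = (T, \emptyset)$ with $T \neq \emptyset$ (of which there are $\sqrt{k} - 1$), one has $|O_\gamma| = k/2$: for each of the $\sqrt k$ values of $i$, there are exactly $\sqrt{k}/2$ unordered pairs $\{S_1, S_2\}$ with $S_1 + S_2 = T$. For $\gamma = (T, \{i, j\})$ with $i \neq j$, one has $|O_\gamma| = \sqrt{k}$ via the bijection $S_1 \mapsto \{(S_1, \{i\}), (S_1 + T, \{j\})\}$ (no double-counting since $i \neq j$). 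As a sanity check, $(\sqrt k - 1)(k/2) + \sqrt k \binom{\sqrt k}{2} \cdot \sqrt k = \binom{k}{2}$.

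For Part 1, the minimum of $|O_\gamma|$ over non-empty equivalence classes is $\sqrt{k}$, and $k^{1/2 - o(1)} + 1 \leq \sqrt{k}$ holds for all sufficiently large $k$ (taking, e.g., the $o(1)$ slack to be $\Theta(1/\log k)$). Hence every pair in $\binom{\S}{2}$ satisfies the $(\delta,\ell)$-folding inequality with $\delta = 1$ and $\ell = 1/2-o(1)$.

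For Part 2, fix any $\ell \geq 1/2$. For large $k$, $k^\ell + 1 > \sqrt k$, so any qualifying pair must have $|O_\gamma| > \sqrt{k}$, which by the classification forces $\gamma$ to be of the form $(T, \emptyset)$ with $T \neq \emptyset$. The total number of such pairs is $(\sqrt k - 1)(k/2) = O(k^{3/2}) = o(\binom{k}{2})$, so no constant $\delta > 0$ witnesses $(\delta, \ell)$-folding. The only subtle point (and it is minor) is the enumeration of the $O_\gamma$ classes; the rest of the argument is routine.
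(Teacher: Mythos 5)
Your proposal is correct and takes essentially the same approach as the paper: compute the Fourier support of $\ADD_k$, split the folding directions according to whether they involve zero or two $\mathbf{y}$-coordinates, count each equivalence class directly, and observe that the $\mathbf{y}$-pair case dominates. One small note: you correctly obtain $|O_\gamma| = k/2$ when $\gamma$ has an empty $\mathbf{y}$-part, whereas the paper states $k$ for the corresponding case (apparently counting ordered rather than unordered pairs there); the discrepancy is immaterial, since both counts exceed $\sqrt{k}$ and the total mass of such pairs is $o\bigl(\binom{k}{2}\bigr)$ either way.
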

For this proof, we view elements of $\f^{\frac{1}{2}\log k + \sqrt{k}}$ as subsets of the set of variables, via the natural equivalence. Addition over $\f^n$ corresponds to symmetric difference of the respective sets.
\begin{proof}

For $a \in [\sqrt{k}]$, define the function $\1_{a} : \f^{\frac{1}{2}\log k} \to \zone$ by \begin{align*}
\1_{a}(x) = 
\begin{cases}
1 & \int(x) = a\\
0 & \text{otherwise}.
\end{cases}
\end{align*}

The Fourier expansion of $\ADD_k$ is given by
\begin{align}\label{eqn: addexpansion}
\sum_{a \in [\sqrt{k}]}\1_a(x) (-1)^{y_{a}}.
\end{align}
It can be verified that for any $a \in [\sqrt{k}]$, the Fourier support of $\1_a$ consists of all subsets of $X:=\bra{x_1, \dots, x_{\frac{1}{2} \log k}}$.
In light of this, the Fourier support $\S$ of $\ADD_k$ is given by
\begin{equation}\label{eqn: addsupport}
\S = \bigcup_{a \in [\sqrt{k}], {M \subseteq X}} \bra{M \cup \bra{y_a}}.
\end{equation}
It is not hard to verify that $|\S| = k$.
Let $\alpha, \beta$ be two distinct elements of $\S$. We now determine $O_{\alpha + \beta}$.
\begin{description}
\item[Case 1]: \emph{$\alpha = M_1 \cup \bra{y_a}$ and $\beta = M_2 \cup \bra{y_a}$ for distinct $M_1, M_2 \subseteq X$, and any {$a \in [\sqrt{k}]$}}.\\
In this case, 
\[
O_{\alpha \triangle \beta} = \bra{(M'_1 \cup \bra{y_b}, M'_2 \cup \bra{y_b})~\middle|~M'_1 \triangle M'_2 = M_1 \triangle M_2, b \in [\sqrt{k}]}.
\]
Hence, $|O_{\alpha \triangle \beta}| = k$.

\item[Case 2]: \emph{$\alpha = M_1 \cup \bra{y_a}$ and $\beta = M_2 \cup \bra{y_b}$ for distinct {$a, b \in [\sqrt{k}]$}, and any $M_1, M_2 \subseteq X$.}\\
In this case,
\[
O_{\alpha \triangle \beta} = \bra{(M'_1 \cup \bra{y_a}, M'_2 \cup \bra{y_b})~\middle|~M'_1 \triangle M'_2 = M_1 \triangle M_2}.
\]
Hence, $|O_{\alpha \triangle \beta}| = \sqrt{k}$.
\end{description}
From both cases above we conclude that for all $(\alpha, \beta) \in \binom{\S}{2}$, we have $|O_{\alpha \triangle \beta}| \geq \sqrt{k} = k^{\ell} + 1$ for $\ell = 1/2 - o(1)$. This immediately proves the first part of the claim.

It is easy to verify that the number of pairs $(\alpha, \beta) \in \binom{\S}{2}$ that fall under the second case above is $(1 - o(1))\binom{k}{2}$. For all these pairs, we have $|O_{\alpha \triangle \beta}| = \sqrt{k} < k^{1/2} + 1$. Hence the number of pairs $(\alpha, \beta) \in \binom{\S}{2}$ with $|O_{\alpha \triangle \beta}| \geq k^{1/2} + 1$ is $o(1) \cdot \binom{k}{2}$. This proves the second part of the claim.

\end{proof}

\end{document}